\newcommand{\bisim}{\leftrightarroweq}
\newcommand{\bigO}{\mathcal{O}}
\newcommand{\mfalse}{\mathrm{false}} %JFG: was bf, maar dat wordt inconsequent gebruikt.
\newcommand{\mtrue}{\mathrm{true}} %JFG: zie boven. 
\newcommand{\BCRP}{BCRP}
\newtheorem{thm}{Theorem} % definition numbers are dependent on theorem numbers
\newtheorem{definition}{Definition}
\newtheorem{lemma}{Lemma}
\newtheorem{fact}{Fact}
\newcommand{\pijl}[1]{\mathrel{\text{$\xrightarrow{\smash[t]{#1}}$}}}
\newcommand{\Act}{\mathit{Act}}
\newcommand{\Nat}{{\mathbb{N}}}
\newcommand{\rel}{{\rightarrow}}
\newcommand{\Bool}{{\mathbb{B}}}
\newcommand{\true}{{\mathrm{true}}}
\newcommand{\false}{{\mathrm{false}}}
\newcommand{\blocklabel}{L_B}
\newcommand{\partit}{\pi}
\newcommand{\markoffset}{\ensuremath{\mathit{off}}}
\tikzset{  plate/.style={draw, shape=rectangle, rounded corners=0.5ex, thick, align=right, inner sep=10pt, inner ysep=10pt,label={[xshift=-20pt,yshift=15pt]south east:#1}}}
\title{\textbf{A linear parallel algorithm to compute bisimulation and relational coarsest partitions}}
\author[1]{Jan Martens}
\author[1]{Jan Friso Groote}
\author[1]{Lars van den Haak}
\author[1,2]{\\Pieter Hijma}
\author[1]{Anton Wijs}
\affil[1]{Eindhoven University of Technology, The Netherlands}
\affil[ ]{\textsf{\small\{j.j.m.martens, j.f.groote, l.b.v.d.haak, a.j.wijs\}@tue.nl}}
\affil[2]{Vrije Universiteit Amsterdam, The Netherlands}
\affil[ ]{\textsf{\small pieter@cs.vu.nl}}
\begin{document}
	\maketitle              % typeset the header of the contribution
\begin{abstract}
\noindent%
The most efficient way to calculate strong bisimilarity is by finding the relational coarsest partition on a transition system. 
We provide the first linear time algorithm to calculate strong bisimulation 
using parallel random access machines (PRAMs). 
More precisely, with $n$ states, $m$ transitions and $|\mathit{Act}|\leq m$ action labels, we provide an algorithm on $\max(n,m)$ processors
that calculates strong bisimulation in time $\bigO(n+|\mathit{Act}|)$ and space $\bigO(n+m)$. 
The best-known PRAM algorithm
has time complexity $\bigO(n\log n)$ on a smaller number of processors 
making it less suitable for massive parallel devices such as GPUs.
An implementation on a GPU shows that the linear time-bound is achievable on contemporary hardware. 

%Kanellakis and Smolka formulated 
%a partition refinement algorithm that finds the coarsest partition of bisimilar states \cite{kanellakis1983} in time $\bigO(m n)$ where
%$n$ is the number of states and $m$ is the number of transitions. 

%We propose a parallel version of this algorithm that performs this 
%partition refinement on an LTS with $n$ states and $m$ transitions, and generalize it to incorporate a set of action labels $\Act$. 
%Its runtime complexity is $\bigO(n+ |\Act|)$ when executed by max($n$,$m$) CRCW PRAM processors. The algorithm does not perform more 
%work than the sequential counterpart. \todo{Comment on branching + weak}
	
%	\keywords{Bisimulation  \and Parallel algorithm \and PRAM.}
\end{abstract}

\section{Introduction}
The notion of \textit{bisimilarity} for Kripke structures and Labeled Transition Systems (LTSs) is commonly used to define behavioural equivalence. Deciding this behavioural equivalence is important in the field of modelling and verifying concurrent systems~\cite{mcrl2,milner1980calculus}. Kanellakis and Smolka proposed a partition refinement algorithm for obtaining the bisimilarity relation for Kripke structures~\cite{kanellakis1983}. The proposed algorithm has a run time complexity of $\bigO(nm)$ where $n$ is the number of states and $m$ is the number of transitions of the input graph. Later, a more sophisticated refinement algorithm running in $\bigO(m~log~n)$ steps was proposed by Paige and Tarjan~\cite{paige1987three}. 

In recent years the increase in the speed of sequential chip design has stagnated due to a multitude of factors such as energy consumption and heat generation. In contrast, parallel devices such as graphics processing units (GPUs) keep increasing rapidly in computational power. In order to profit from the acceleration of these devices, we require algorithms with massive parallelism.  The article ``There's plenty of room at the Top: What will drive computer performance after Moore's law'' by Leierson et al.~\cite{Leisersoneaam9744} 
indicates that the advance in computational performance will come from software and algorithms that
can employ hardware structures with a massive number of simple, parallel processors, such as
GPUs. In this paper, we propose such an algorithm to decide bisimilarity. 

%discusses three directions to advance performance in the post-Moore era: software, algorithms, and architecture.  This work focuses on two of the three directions: we introduce a new, highly parallel algorithm for the architectural trend that they call ``processor simplification'' of which we consider the GPU as an example. 
%%Jan: This might need some extra explanation.

Deciding bisimilarity is $P$-complete~\cite{balcazar_1992}, which suggests that bisimilarity is an inherently sequential problem. This fact has not withheld the community from searching efficient parallel algorithms for deciding bisimilarity of Kripke structures. In particular, Lee and Rajasekaran~\cite{lee_1994_RCPP} proposed a parallel algorithm based on the Paige Tarjan algorithm that works in $\bigO(n\ log\ n)$ time complexity using $\frac{m}{\log n}\log \log n$ Concurrently Read and Concurrently Write (CRCW) processors.

In this work, we improve on the best known theoretical bound for PRAM algorithms using a higher degree of parallelism. The proposed algorithm improves the run time complexity to $\bigO(n)$ on $max(m,n)$ processors and is based on the sequential algorithm of Kanellakis and Smolka~\cite{kanellakis1983}. The larger number of processors used in this algorithm favours the increasingly parallel design of contemporary and future hardware. In addition, the algorithm is \textit{optimal} w.r.t.\ the sequential Kanellakis-Smolka algorithm, meaning that overall, it does not perform more work than its sequential counterpart.

We first present our algorithm on Kripke structures where transitions are unlabelled. 
However, as labelled transition systems 
(LTSs) are commonly used, and labels are not straightforward to incorporate in an efficient way (cf.\ for instance \cite{DBLP:journals/fuin/Valmari10}),
we discuss how our algorithm can be extended to take action labels into account. This leads to an algorithm with a run time 
complexity of $\bigO(n + |\Act|)$, with $\Act$ the set of action labels.

Our algorithm has been designed for and can be analyzed with the CRCW PRAM model, following notations from~\cite{stockmeyer1984simulation}. This model is an extension of the normal RAM model, allowing multiple processors to work with shared memory. In the CRCW PRAM model, parallel algorithms can be described in a straightforward and elegant way.
In reality, no device exists that completely adheres to this PRAM model, but with recent advancements, hardware gets better and better at approximating the model since the number of parallel threads keeps growing. We demonstrate this by translating the PRAM algorithm
to GPU code. 
We straightforwardly implemented our algorithm in CUDA and experimented with an NVIDIA Titan RTX, showing that our algorithm 
performs mostly in line with what our PRAM algorithm predicts.

The paper is structured as follows: In Section~\ref{sec:prelims}, we recall the necessary preliminaries on the CRCW PRAM model and state the partition refinement problems this paper focuses on. In Section~\ref{sec:algorithm}, we propose a parallel algorithm to compute bisimulation for Kripke structures, which is also called the Relational Coarsest Partition Problem (RCPP). In this section, we also prove the correctness of the algorithm and provide a complexity analysis. In Section~\ref{sec:labels}, we discuss the details for an implementation with multiple action labels, thereby supporting LTSs, which forms the Bisimulation Coarsest Refinement Problem (\BCRP{}). In Section~\ref{sec:experiments} we discuss the results of the implementation and in Section~\ref{sec:weaker} we address the usage of weaker PRAM models. Finally, in Section~\ref{sec:relatedwork}, we discuss related work. 
\section{Preliminaries}\label{sec:prelims}
\subsection{The PRAM Model}
The \textit{Parallel Random Access Machine} (PRAM) is a natural extension of the normal Random Access Machine (RAM), where an arbitrary number of parallel programs can access the memory. Following the definitions of~\cite{stockmeyer1984simulation} we use a version of PRAM that is able to Concurrently Read and Concurrently Write (CRCW PRAM). It differs from the model introduced in~\cite{wyllie1978parallelism} in which the PRAM model was only allowed to concurrently read from the same memory address, but concurrent writes (to the same address) could not happen. We call the model from~\cite{wyllie1978parallelism} an Concurrent Read, Exclusive Write (CREW) PRAM model.

A CRCW PRAM consists of a sequence of numbered processors $P_0, P_1, \dots$. These processors have all the natural instructions of a normal RAM such as addition, subtraction, and conditional branching based on the equality and less-than operators. There is an infinite amount of common memory the processors have access to. The processors have instructions to read from and write to the common memory. In addition, a processor $P_i$ has an instruction to obtain its unique index $i$. A PRAM also has a function $P:\Nat \to \Nat$ which defines a bound on the number of processors given the size of the input.

All the processors have the same program and run synchronized in a single instruction, multiple data (SIMD) fashion. In other words, all processors execute the program in lock-step. Parallelism is achieved by distributing the data elements over the processors and having the processors apply the program instructions on `their' data elements. 

Initially, given input consisting of $n$ data elements, the CRCW PRAM assumes that the input is stored in the first $n$ registers of the common memory, and starts the first $P(n)$ processors $P_0, P_1, \dots ,P_{P(n)-1}$. 

We need to define what the behaviour of the machine will be whenever a concurrent write happens. The way to handle this memory contention in concurrent writes is usually by assuming one of the following:
\begin{itemize}
	\item \textbf{(Common)} All processors try to write the same value and succeed, otherwise, the writes are not legal and fail;
	\item \textbf{(Arbitrary)} Only one arbitrary attempt to write succeeds; 
	\item \textbf{(Priority)} Only the processor with the lowest index succeeds in writing.
\end{itemize}

The algorithm proposed in this paper works if we make either the \textit{arbitrary} or the \textit{priority} assumption. In Section \ref{sec:weaker} we explain how we can
adapt it to work under the \textit{common} assumption.

%We will discuss what changes are needed for the algorithm to be compatible with CREW PRAM and CRCW PRAM with the \textit{common} conflict resolution, and discuss the consequences for the required resources. 
A  parallel program for a PRAM is called \textit{optimal} w.r.t.\ a sequential algorithm if the total work done by the program does not exceed the work done by the sequential algorithm. More precisely, if $T$ is the parallel run time and $P$ the number of processors used, then the algorithm is optimal w.r.t.\ a sequential algorithm running in $S$ steps if $P\cdot T \in \bigO(S)$.

%Maybe remove as it does not necessarily add something, on the other hand, it is nice to motivate the use of PRAM.
The computational complexity of these models is well studied and there is a close relation between circuit complexity and the complexity of PRAM algorithms~\cite{stockmeyer1984simulation}.
\subsection{Strong Bisimulation}

To formalise concurrent system behaviour, we use LTSs.

\begin{definition}[Labeled Transition System] A Labeled Transition System (LTS) is a three-tuple $A=(S, Act,  \rightarrow)$ where $S$ is a finite set of states, $Act$ a finite set of action labels, and $\rightarrow \subseteq S\times Act \times S$ the transition relation. 
\end{definition}
Let $A = (S, Act, \rel)$ be an LTS. Then, for any two states $s,t\in S$ and $a \in Act$, we write $s\pijl{a} t$ iff $(s,a,t)\in \rel$. 

Kripke structures differ from LTSs in the fact that the states are labelled as opposed to the transitions. In the current paper, for convenience, instead of using Kripke structures where appropriate, we reason about LTSs with a single action label, i.e., $|\Act| = 1$. Computing the coarsest partition of such an LTS can be done in the same way as for Kripke structures, apart from the fact that in the latter case, a different initial partition is computed that is based on the state labels (see, for instance, \cite{jansen_2020}).

\begin{definition}[Strong bisimulation]
\label{def:bisim}
On an LTS $A = (S,Act, \rel)$ a relation $R\subseteq S\times S$ is called a strong bisimulation relation if and only if it is symmetric and for all $s,t\in S$ with $s R t$ and for all $a\in Act$ with $s\pijl{a} s'$, we have:
	
		$$\exists t' \in S. t\pijl{a} t' \wedge s' R t'$$   
\end{definition}

Whenever we refer to bisimulation we mean strong bisimulation. Two states $s, t \in S$ in an LTS $A$ are called \emph{bisimilar}, denoted by $s \bisim t$, iff there is some bisimulation 
relation $R$ for $A$ that relates $s$ and $t$. 

A \textit{partition} $\partit$ of a finite set of states $S$ is a set of subsets that are pairwise disjoint and whose union is equal to $S$, i.e., $\bigcup_{B\in \partit} B = S$. Every element $B\in \partit$ of this partition $\partit$ is called a \textit{block}. 

We call partition $\partit'$ a \textit{refinement} of $\partit$ iff for every block $B'\in \partit'$ there is a block $B \in \partit$ such that $B' \subseteq B$. We say a partition $\partit$ of a finite set $S$ induces the relation $R = \{(s,t) \mid \exists B \in \partit . s \in B \wedge t \in B \}$. This is an equivalence relation of which the blocks of $\partit$ are the equivalence classes. 

Given an LTS $A = (S, Act, \rel)$ and two states $s,t\in S$ we say that $s$ \textit{reaches} $t$ with action $a\in Act$ iff $s\pijl{a}t$. 
%We say $t$ is \textit{reachable} by $s$ iff there is an action $a\in Act$ with which $s$ reaches $t$. 
A state $s$ \textit{reaches} a set $U\subseteq S$ with an action $a$ iff there is a state $t\in U$ such that $s$ reaches $t$ with action $a$. 
%In a partition $\partit$ of $S$, a set of states $U\subseteq S$ is called \textit{stable} iff for every action $a$ and block $B \in \partit$, either every state $s\in B$ reaches $U$ with $a$, or no state $s\in B$ reaches $U$ with $a$. 
%A block is called \emph{unstable} iff it is not stable.
A set of states $V\subseteq S$ is called \textit{stable} under a set of states 
$U\subseteq S$
iff for all actions $a$ either all states in $V$ reach $U$ with $a$, or no state in $V$
reaches $U$ with $a$. A partition $\partit$ is stable under a set of states $U$ iff
each block $B\in\partit $ is stable under $U$. The partition $\partit$ is called stable iff it is stable under all its own blocks $B\in \partit$. 

\begin{fact}\label{fact:stable}\cite{paige1987three}
Stability is inherited under refinement, i.e. given a partition $\partit$ of $S$ and a refinement $\partit'$ of $\partit$, then if $\partit$ is stable under $U\subseteq S$,
then $\partit'$ is also stable under $U$.
\end{fact}

The main problem we focus on in this work is called the bisimulation refinement problem (\textbf{\BCRP{}}). It is defined as follows:

\textbf{Input:}  An LTS $M = (S,Act, \rel)$.

\textbf{Output:} The partition $\partit$ of $S$ which is the coarsest partition, i.e., has the smallest number of blocks, that forms a bisimulation relation.

In a Kripke structure, the transition relation forms a single binary relation, since the transitions are unlabelled. This is also the case when an LTS has a single action label. In that case, the problem is called the Relational Coarsest Partition Problem (\textbf{RCPP})~\cite{kanellakis1983,lee_1994_RCPP,paige1987three}. This problem is defined as follows:

\textbf{Input:} A set $S$, a binary relation $\rightarrow: S\times S$ and an initial partition $\pi_0$

\textbf{Output:} The partition $\pi$ which is the coarsest refinement of $\pi_0$ and which is a bisimulation relation.

It is known that \BCRP{} is not significantly harder than RCPP as there are intuitive translations from LTSs to Kripke structures~\cite[Dfn. 4.1]{de1990action}. 
However, some non-trivial modifications can speed-up the algorithm for some cases, hence we discuss both problems separately. In Section \ref{sec:algorithm}, we discuss the basic parallel algorithm for RCPP, and in Section~\ref{sec:labels}, we discuss the modifications required to efficiently solve the \BCRP{} problem for LTSs with multiple action labels.

\section{Relational Coarsest Partition Problem}\label{sec:algorithm}
\subsection{A Sequential Algorithm}
In this section, we discuss a sequential algorithm based on one of Kanellakis and Smolka~\cite{kanellakis1983} for RCPP. This is the basic algorithm which we adapt to the parallel PRAM algorithm. The algorithm starts with an input partition $\partit_0$ and refines all blocks until a stable partition is reached. This stable partition will be the coarsest refinement that defines a bisimulation relation.

The sequential algorithm, Algorithm~\ref{lst:KS}, works as follows. Given are a set $S$, 
a relation $\rightarrow\subseteq S\times S$, and an initial partition $\pi_0$ of $S$. 
Initially, we mark the partition as not necessarily stable under all blocks by putting
these blocks in a set $\mathit{Unstable}$. In any iteration of the algorithm, if a block $B$ of the current partition is not in
$\mathit{Unstable}$, then the current partition is stable
under $B$. If $\mathit{Unstable}$ is empty, the partition is stable under all its blocks,
and the partition represents the required bisimulation. 

As long as some blocks are in $\mathit{Unstable}$ (line 3), a single block $B\in\pi$ is taken from this set (line 4) and we split the current partition such that it becomes stable under $B$. Therefore, we refer to this block as the \emph{splitter}. The set $S' = \{s \in S \mid \exists t \in B. s\rightarrow t \}$ is the reverse image of $B$ (line 6). This set consists of all states that can reach $B$, and we use $S'$ to define our new blocks. All blocks $B'$ that have a non-empty intersection with $S'$, i.e., $B' \cap S' \neq \emptyset$, and are not a subset of $S'$, i.e., $B'\cap S' \neq B'$ (line 7), are split in the subset of states that reach $S'$ and the subset of states that do not reach $S'$ (lines 8-9). These two new blocks are added to the set of $\mathit{Unstable}$ blocks (line 10). The number of states is finite, and blocks can be split only a finite number of times. Hence, blocks are only finitely often put
in $\mathit{Unstable}$, and so the algorithm terminates. 
%\todo{Unclear explanation. Add definitions of splitters and maybe also unstable. \textit{stable} is already defined}
%\todo{AW: maybe say something about why this algorithm works?}

\begin{algorithm}[t]
	$\partit:= \partit_0$\;
	$\mathit{Unstable} := \pi$\;
	\While{$\mathit{Unstable} \neq \emptyset$}{
		\ForEach{$B\in \mathit{Unstable}$}{
			$\mathit{Unstable} := \mathit{Unstable} \setminus \{B\}$\;
			$S' := \{s \in S \mid \exists t \in B. s\pijl{} t \}$\;				
			\ForEach{$B'\in \partit \textit{ with } \emptyset \subset B' \cap S' \subset B'$}{
				\tcp{Split $B'$ into $B' \cap S'$ and $B' \setminus S'$}
				$\partit := \partit \setminus \{B\}$\;
				$\partit:= \partit \cup \{B'\cap S', B' \setminus S'\}$\;
				$\mathit{Unstable} := \mathit{Unstable} \cup \{B'\cap S', B' \setminus S'\}$\;
			}
		}
	}
	\caption{\label{lst:KS}Sequential algorithm based on Kanellakis-Smolka}
\end{algorithm}

\subsection{The PRAM Algorithm}

Next, we describe a PRAM algorithm to solve RCPP that is based on the sequential algorithm given in Algorithm~\ref{lst:KS}.

\subsubsection{Block representation}
Given an LTS $A = (S, Act, \rightarrow)$ with $|A| = 1$ and $|S| = n$ states, we assume that the states are labeled with unique indices $0, \dots, n-1$. A partition $\pi$ in the PRAM algorithm is represented by assigning a block label from a set of block labels $\blocklabel$ to every state. The number of blocks can never be larger than the number of states, hence,
we use the indices of the states as block labels: $L_B=S$. We exploit this in the PRAM algorithm to efficiently select a new block label whenever a new block is created. We select the block label of a new block by electing one of its states to be the \textit{leader} of that block and using the index of that state as the block label. By doing so, we maintain an invariant that the leader of a block is also a member of the block. 
%
%So if for two states $s,s'\in S$, $s'$ is in the block where $s$ is the block leader, this means $s$ is also in this block, $L_\pi(s) = s$.  JFG: L_\pi wordt nergens gebruikt lijkt het. Hij wordt ook niet gedefinieerd. XXXXXXX
%In other words, if a block $B$ has label $i \in \blocklabel$ ($0 \leq i < n$), then for the state $s$ with index $i$, we have $L_\pi(s) = i$.

In a partition $\partit$, whenever a block $B\in \partit$ is split into two blocks $B'$ and $B''$, the leader $s$ of $B$ which is part of $B'$ becomes the leader of $B'$, and for $B''$, a new state $t \in B''$ is elected to be the leader of this new block. Since the new leader is not part of any other block, the label of $t$ is fresh with respect to the block labels that are used for the other blocks. This method of using state leaders to represent subsets was first proposed in~\cite{wijsscc,wijs_2015}.

\subsubsection{Data structures}
The common memory contains the following information:
\begin{enumerate}
	\item $n:\Nat$, the number of states of the input.
	\item $m:\Nat$, the number of transitions of the input relation.
	\item The input, a fixed numbered list of transitions. For every index $0\leq i<m$ of a transition, a source $\textit{source}_i\in S$ and target $\textit{target}_i\in S$ are given, representing the transition $\textit{source}_i\to \textit{target}_i$.
	\item $C: \blocklabel \cup \{\bot\}$, the label of the current block that is used as a splitter; $\bot$ indicates that no splitter has been selected.
	\item The following is stored in lists of size $n$, for each state with index $i$:
	\begin{enumerate}
		\item $\mathit{mark}_i: \Bool$, a mark indicating whether state $i$ is able to reach the splitter.
		\item $\mathit{block}_i:\blocklabel$, the block of which state $i$ is a member.
	\end{enumerate}
	\item The following is stored in lists of size $n$, for each potential block with block label $i$:
	\begin{enumerate}
		\item $\mathit{new\_leader}_i : \blocklabel$ the leader of the new block when a split is performed. 
		\item $\mathit{unstable}_i : \Bool$ indicating whether $\partit$ is possibly unstable w.r.t.\ the block. %Initially, for all block labels $i$, we have $\mathit{stable}_i = \mathbf{false}$. %Initially, for all block labels $i$, we have $\mathit{stable}_i = \mathbf{false}$.
	\end{enumerate}
\end{enumerate}

As input, we assume that each state with index $i$ has an input variable $I_i\in L_B$ that is the initial block label. In other words, the values of the $I_i$ variables together encode $\pi_0$. Using this input, the initial values of the block label $\textit{block}_i$ variables are calculated to conform to our block representation with leaders. Furthermore in the initialization, $\textit{unstable}_i = \mfalse$ for all $i$ that are not used as block label, and $\mtrue$ otherwise.

\subsubsection{The algorithm} We provide our first PRAM algorithm in Algorithm~\ref{lst:alg}. The PRAM is started with $max(m,n)$ processors. These processors are dually used for transitions and states.

The algorithm performs initialisation (lines 1-6), after which each block has selected a new leader (lines 3-4), ensuring that the leader is one of its own states, and the initial blocks are set to unstable. Subsequently, the algorithm enters a single loop that can be explained in three separate parts. %\todo{PH: Should we make the description below more clear by stating what is being done by $n$ processors and what is being done by $m$ processors? I think the if statement may be confusing to readers that are not used to code that is run in lockstep.}

\begin{description}
	\item[Splitter selection (lines~8-14), executed by $n$ processors.] Every variable $mark_i$ is set to $\mfalse$. After this, every processor with index $i$ will check $\mathit{unstable}_i$. If block $i$ is marked unstable the processor tries to write $i$ in the variable $C$. If multiple write accesses to $C$ happen concurrently in this iteration, then according to both the arbitrary and the priority PRAM model (see Section~\ref{sec:prelims}), only one process $j$ will succeed in writing, setting $C:=j$ as splitter in this iteration.
	
	\item[Mark states (lines~15-17), executed by $m$ processors.]  Every processor $i$ is responsible for the transition $s_i\pijl{} t_i$ and checks if $t_i$ ($\mathit{target}_i$) is in the current block $C$ (line~\ref{alg:reachable}). If this is the case the processor writes $\mtrue$ to $mark_{\mathit{source}_i}$ where $\mathit{source}_i$ is $s_i$.  
	This mark now indicates that $s_i$ reaches block $C$.
	
	\item[Performing splits (lines~18-26), executed by $n$ processors.] Every processor $i$ compares the mark of state $i$, i.e., $\mathit{mark}_i$, with the mark of the leader of the block in which state $i$ resides, i.e., $\mathit{mark}_{\mathit{block}_i}$ (line~\ref{alg:shouldsplit}). If the marking is different, state $i$ has to be split from $\mathit{block}_i$ into a new block. At Line~\ref{alg:leader}, a new leader is elected among the states that form the newly created block. The index of this leader is stored in $\mathit{new\_leader}_{\mathit{block}_i}$. The unstability of block $\mathit{block}_i$ is set to $\mtrue$ (line 22). After that, all involved processors update the block index for their state (line 21) and update the stability of the new block (line 22).
\end{description}

\begin{figure*}[tb]
	\centering
	\scalebox{1}{
	\begin{tikzpicture}[->,shorten >=1pt,auto,node distance=1.25cm]
		
		\node[state] (A)                    {$s_1$};
		\node[state]         (B) [below right = 1.5cm and 0.1cm of A] {$s_4$};
		\node[state]         (C) [right of=A] {$s_2$};
		\node[state]         (D) [right of = B] {$s_5$};
		\node[state]         (E) [right of=C] {$s_3$};
		
		\path (A) edge (B)
		(C) edge (D)
		(C) edge[bend right] (E)
		(E) edge[bend right] (C)
		;

		\node[plate=$B_{s_1}$ , minimum width=4cm, minimum height=1.5cm, rectangle,draw, anchor=west, xshift=-0.25cm] (p1) at (A.west){};
		\node[plate=$B_{s_4}$, minimum width=2.75cm, minimum height=1.5cm, rectangle,draw, anchor=west, xshift=-0.25cm, color=blue] (p2) at (B.west) {};
		
		\node (text) [below of=p2] {Step 1: Select \textit{current\_block:= $B_{s_4}$}};
		
		\node[state, color=blue] (A1)[right= 1.5 of E]                    {$s_1$};
		\node[state]         (B1) [below right= 1.5cm and 0.1cm of A1] {$s_4$};
		\node[state, color=blue]         (C1) [right of=A1] {$s_2$};
		\node[state]         (D1) [right of = B1] {$s_5$};
		\node[state]         (E1) [right of=C1] {$s_3$};
		
		\path (A1) edge[color=blue] (B1)
		(C1) edge[color=blue] (D1)
		(C1) edge[bend right] (E1)
		(E1) edge[bend right] (C1)
		;

		\node[plate=$B_{s_1}$ , minimum width=4cm, minimum height=1.5cm, rectangle,draw, anchor=west, xshift=-0.25cm] (p11) at (A1.west){};
		
		\node[plate=$B_{s_4}$, minimum width=2.75cm, minimum height=1.5cm, rectangle,draw, anchor=west, xshift=-0.25cm] (p12) at (B1.west) {};
		
		\node (text1) [below of=p12] {Step 2: Mark nodes $s_1,s_2$};
		
		\node[state] (A2)[right= 1.5 of E1]                    {$s_1$};
		\node[state]         (B2) [below right= 1.5cm and 0.1cm of A2] {$s_4$};
		\node[state]         (C2) [right of=A2] {$s_2$};
		\node[state]         (D2) [right of = B2] {$s_5$};
		\node[state]         (E2) [right= 1cm of C2] {$s_3$};
		
		\path (A2) edge (B2)
		(C2) edge (D2)
		(C2) edge[bend right] (E2)
		(E2) edge[bend right] (C2)
		;
		
		\node[plate=$B_{s_1}$ , minimum width=2.75cm, minimum height=1.5cm, rectangle,draw, anchor=west, xshift=-0.25cm,color=blue] (p21) at (A2.west){};
		\node[plate=$B_{s_3}$ , minimum width=1.5cm, minimum height=1.5cm, rectangle,draw, anchor=west, xshift=-0.25cm, color=blue] (p21) at (E2.west){};
		\node[plate=$B_{s_4}$, minimum width=2.75cm, minimum height=1.5cm, rectangle,draw, anchor=west, xshift=-0.25cm] (p22) at (B2.west) {};
		
		\node (text2) [below of=p22] {Step 3: Split $B_{s_1}$ into $B_{s_1}, B_{s_3}$};
	\end{tikzpicture}}
	\caption{One iteration of Algorithm~\ref{lst:alg}\label{fig:iteration}}
\end{figure*}

The steps of the program are illustrated in Figure~\ref{fig:iteration}. The notation $B_{s_i}$ refers to a block containing all states that have state $s_i$ as their block leader. In the figure on the left, we have two blocks $B_{s_1}$ and $B_{s_4}$, of which at least $B_{s_4}$ is marked unstable. Block $B_{s_4}$ is selected to be splitter, i.e., $C = B_{s_4}$ at line 12 of Algorithm~\ref{lst:alg}. In the figure in the middle, $\mathit{mark}_i$ is set to $\mtrue$ for each state $i$ that can reach $B_{s_4}$ (line 16). Finally, block $B_{s_4}$ is set to stable (line 19), all states compare their mark with the leader's mark, and the processor working on state $s_3$ discovers that the mark of $s_3$ is different from the mark of $s_1$, so $s_3$ is elected as leader of the new block $B_{s_3}$ at line 21 of Algorithm~\ref{lst:alg}. Both $B_{s_1}$ and $B_{s_3}$ are set to unstable (lines 22 and 24).

The algorithm repeats execution of the \textbf{while}-loop until all blocks are marked stable. 

\begin{algorithm}[t]
	\SetAlgoLined	
	\If{$i < n$}{
		\tcp{Initialize all variables}
		$\mathit{unstable}_i := \mfalse$\;
		$\mathit{new\_leader}_{I_i} := i$\;
		$\mathit{block}_i := new\_leader_{I_i}$\;
		$\mathit{unstable}_{\mathit{block}_i} := \mtrue$\;
	}
	\SetKwRepeat{Do}{do}{while}
	\Do{$C \neq \bot$}{
		$C := \bot$\;
		\If{$i < n$}{
			$\mathit{mark}_i := \mfalse$\; 
			\If{$\mathit{unstable}_i$}{
				$C := i$\;\label{alg:block}
			}
		}
		\If{$i < m$ and $\mathit{block}_{\mathit{target_i}} = C$\label{alg:reachable}} {
			$\mathit{mark}_{\mathit{source_i}} := \mtrue$\;
		}
		\If{$i < n$ and $C \neq \bot$}{
			$\mathit{unstable}_{C} := \mfalse$\;\label{alg:stability_c}
			\If{$\mathit{mark}_i \neq \mathit{mark}_{\mathit{block}_i}$\label{alg:shouldsplit}}{
				$\mathit{new\_leader}_{\mathit{block}_i} := i$\;\label{alg:leader}
				$\mathit{unstable}_{\mathit{block}_i} := \mtrue$\;
				$\mathit{block}_i := \mathit{new\_leader}_{\mathit{block}_i}$\; 
				$\mathit{unstable}_{\mathit{block}_i} := \mtrue$\;						
			}
			
		}
	}
	\caption{\label{lst:alg}The algorithm for each processor $P_{i}$ in the PRAM with $i\in[0, \dots, max(n,m)]$}
\end{algorithm}
%\todo{index in a different sense than $stable_{block_i}$ etc or explain this juggling of indices}

\subsection{Correctness}
The $\mathit{block}_i$ list in the common memory at the start of iteration $k$ defines a partition $\partit_k$ where states $s\in S$ with equal block labels \textit{block$_i$} form the blocks:

$$\pi_k = \{\{s \in S \mid block_s = s'\}\mid s'\in S\} \setminus \emptyset$$

A run of the program produces a sequence $\partit_0, \partit_1, \dots$ of partitions. Observe that partition $\partit_k$ is a refinement of every partition $\partit_0,\partit_1,\dots, \partit_{k-1}$, since blocks are only split and never merged. 

A partition $\partit$ induces a relation of which the blocks are the equivalence classes. For an input partition $\pi_0$ we call the relation induced by the coarsest refinement of $\pi_0$ that is a bisimulation relation $\bisim_{\pi_0}$.

We now prove that Algorithm~\ref{lst:alg} indeed solves RCPP. We first introduce Lemma~\ref{lemma:invariant} which is invariant throughout execution and expresses that states which are related by $\bisim_{\pi_0}$ are never split into different blocks. This lemma implies that if a refinement forms a bisimulation relation, it is the coarsest.

\begin{lemma}\label{lemma:invariant}
	Let $S$ be the input set of states, $\rightarrow:S\times S$ the input relation and $\pi_0$ the input partition. Let $ \partit_1,\partit_2, \dots$ be the sequence of partitions produced by Algorithm~\ref{lst:alg}, then for all initial blocks $B\in \partit_0$, states $s,t\in B$ and iteration $k\in \Nat$:
	$$ s\bisim_{\pi_0} t \implies \exists B\in \partit_k . s,t\in B$$
\end{lemma}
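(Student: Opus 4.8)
The natural route is induction on the iteration number $k$. The base case $k=0$ is immediate: $\partit_0$ is the input partition, and by hypothesis $s,t$ lie in the same initial block $B$, so there is nothing to prove (indeed $\bisim_{\pi_0}$ refines $\partit_0$ by definition, which is even stronger). For the induction step, assume that whenever $s \bisim_{\pi_0} t$ we have $s,t$ in a common block of $\partit_k$; we must show the same for $\partit_{k+1}$. Since $\partit_{k+1}$ differs from $\partit_k$ only by splitting blocks according to the splitter $C$ chosen in iteration $k$, it suffices to show that the split performed in that iteration never separates two $\bisim_{\pi_0}$-related states.

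So fix $s \bisim_{\pi_0} t$ and let $B \in \partit_k$ be the common block containing them (exists by the induction hypothesis). In the algorithm, $s$ and $t$ end up in different blocks of $\partit_{k+1}$ exactly when $\mathit{mark}_s \neq \mathit{mark}_t$ at line~\ref{alg:shouldsplit} — more precisely, when exactly one of them agrees with $\mathit{mark}_{\mathit{block}}$ for their shared block leader, which forces $\mathit{mark}_s \neq \mathit{mark}_t$. Now $\mathit{mark}_s = \mtrue$ iff some transition $s \rightarrow u$ has $\mathit{block}_u = C$, i.e.\ iff $s$ reaches the block $C$ (the set of states carrying block label $C$ in $\partit_k$). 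Hence I need the following: if $s$ reaches the set $C$ and $s \bisim_{\pi_0} t$, then $t$ also reaches $C$. This is where the semantics of $\bisim_{\pi_0}$ enters. Write $C$ (the set of states with block label $C$) as a union of $\bisim_{\pi_0}$-equivalence classes — this is legitimate precisely because $\bisim_{\pi_0}$ refines $\partit_k$, which is itself the content of the induction hypothesis applied across all classes. If $s \rightarrow u$ with $u \in C$, then since $s \bisim_{\pi_0} t$ and we are in a one-action LTS, there is $u'$ with $t \rightarrow u'$ and $u \bisim_{\pi_0} u'$; because $C$ is a union of $\bisim_{\pi_0}$-classes and $u \in C$, also $u' \in C$, so $t$ reaches $C$. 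By symmetry of $\bisim_{\pi_0}$ the converse holds too, so $\mathit{mark}_s = \mathit{mark}_t$, and the split does not separate $s$ from $t$. Thus they remain together in $\partit_{k+1}$.

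The one point requiring care — and the main obstacle — is the justification that the block $C$ (viewed as a set of states) is a union of $\bisim_{\pi_0}$-equivalence classes at the moment iteration $k$ reads it. This is not an extra assumption: it follows from the induction hypothesis, which says every $\bisim_{\pi_0}$-class is contained in a single block of $\partit_k$, so in particular no $\bisim_{\pi_0}$-class straddles the boundary of $C$. I would state this explicitly as an intermediate observation before invoking the bisimulation transfer property. A secondary subtlety is matching the informal "mark" description in the algorithm to a clean mathematical predicate: one should note that $\mathit{mark}_i$ after the marking phase equals $\mtrue$ iff state $i$ reaches the set of states with block label $C$, and that the leader $\mathit{block}_i$ of a block is itself a member of that block (the invariant maintained by the leader-election scheme), so comparing $\mathit{mark}_i$ with $\mathit{mark}_{\mathit{block}_i}$ is comparing $i$'s reachability of $C$ with that of a fixed representative of its block. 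Once these two bookkeeping facts are in place, the argument is just the standard "stability is compatible with bisimilarity" reasoning, and the induction closes.
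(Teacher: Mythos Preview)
Your proposal is correct and follows essentially the same approach as the paper: induction on $k$, with the inductive step showing that $\mathit{mark}_s = \mathit{mark}_t$ for bisimilar $s,t$ by using the bisimulation transfer property on a transition into $C$ and then applying the induction hypothesis to the target states to conclude the matching target also lies in $C$. Your phrasing of the key step as ``$C$ is a union of $\bisim_{\pi_0}$-classes'' is just a repackaging of the paper's direct application of the induction hypothesis to $s'$ and $t'$; the added remarks about the leader invariant are sound bookkeeping but not needed for the argument to go through.
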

\begin{proof}
	This is proven by induction on $k$. In the base case, $\partit_0$, this is true by default. Now assume for a particular $k\in \Nat$ that the property holds. We know that the partition $\partit_{k+1}$ is obtained by splitting with respect to a block $C\in \partit_k$. For two states $s,t\in S$ with $s\bisim_{\pi_0} t$ we know that $s$ and $t$ are in the same block in $\partit_k$. In the case that both $s$ and $t$ do not reach $C$, then $mark_s = mark_t =\mfalse$. Since they were in the same block, they will be in the same block in $\partit_{k+1}$. 
	
	Now consider the case that at least one of the states is able to reach $C$. Without loss of generality say that $s$ is able to reach $C$. Then there is a transition $s\rightarrow s'$ with $s'\in C$. By Definition~\ref{def:bisim}, there exists a $t'\in S$ such that $t\rightarrow t'$ and $s'\bisim_{\pi_0} t'$. By the induction hypothesis we know that since $s' \bisim_{\pi_0} t'$, $s'$ and $t'$ must be in the same block in $\partit_k$, i.e., $t'$ is in $C$. This witnesses that $t$ is also able to reach $C$ and we must have $\mathit{mark}_s = \mathit{mark}_t = \mtrue$. Since the states $s$ and $t$ are both marked and are in the same block in $\partit_k$, they will remain in the same block in $\partit_{k+1}$.
\end{proof}

\begin{lemma}\label{lemma:bisimulation}
	Let $S$ be the input set of states with $\rightarrow:S \times S$, $\blocklabel = S$ the block labels, and $\partit_n$ the partition stored in the memory after termination of Algorithm~\ref{lst:alg}. Then the relation induced by $\partit_n$ is a bisimulation relation.
\end{lemma}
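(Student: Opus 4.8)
The plan is to show that the partition $\partit_n$ stored in memory upon termination is stable under all of its own blocks, which by the definition of stability and bisimulation immediately gives that the induced relation is a bisimulation. The key observation driving termination is the loop guard: the \textbf{do-while} loop exits precisely when $C = \bot$ after an iteration, and $C$ remains $\bot$ only if no processor $i < n$ found $\mathit{unstable}_i = \mtrue$ during the splitter-selection phase. So at termination, $\mathit{unstable}_i = \mfalse$ for every $i$ that is currently used as a block label.

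The crux is therefore a supporting invariant: \emph{if $\mathit{unstable}_i = \mfalse$ for a block label $i$ of a block $B \in \partit_k$, then $\partit_k$ is stable under $B$}. I would prove this by induction on the iteration count $k$, tracking how the $\mathit{unstable}$ flags are maintained. Concretely: when a block $C$ is selected as splitter (line~\ref{alg:block}), the algorithm sets $\mathit{unstable}_C := \mfalse$ (line~\ref{alg:stability_c}), and in the same iteration every state is compared against its block leader's mark and split off if the marks differ; hence after this iteration, every block reachable-or-not-reachable $C$ is homogeneous with respect to reaching $C$, i.e. the new partition is stable under $C$. Any block that actually got split has its $\mathit{unstable}$ flag (re)set to $\mtrue$ (lines 22 and 24), and no block label ever has its flag cleared except the current splitter $C$. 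Combined with Fact~\ref{fact:stable} (stability is inherited under refinement), once a partition is stable under a set $C$ it stays stable under $C$ in all later refinements, so the flag $\mathit{unstable}_C = \mfalse$ correctly certifies stability under the block currently labelled $C$ — one must be slightly careful here that the \emph{block} labelled $C$ may itself shrink in later iterations, but since $\partit_{k+1}$ refines $\partit_k$, a block labelled $C$ in a later partition is a subset of the old one, and a subset of a stable block is stable, so the invariant is preserved. I also need to note that a block's leader stays a member of the block (the stated representation invariant), so "the block labelled $i$" is well-defined and nonempty exactly when it appears in $\partit_k$.

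Putting it together: at termination, for every block $B \in \partit_n$ with label $i$ we have $\mathit{unstable}_i = \mfalse$, so by the invariant $\partit_n$ is stable under every $B \in \partit_n$, i.e. $\partit_n$ is stable. A stable partition's induced equivalence relation is a bisimulation: it is symmetric by construction, and stability under each block means that for any $s, t$ in the same block and any transition $s \rightarrow s'$, the block $B'$ containing $s'$ is reached by $s$, hence (by stability of $s$'s block under $B'$) also reached by $t$, giving $t \rightarrow t'$ with $t'$ in the same block as $s'$, which is exactly Definition~\ref{def:bisim}.

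The main obstacle I expect is the bookkeeping around block \emph{labels} versus block \emph{contents} across iterations: the set of states carrying label $i$ changes over time, and I must argue carefully that whenever $\mathit{unstable}_i$ is left at $\mfalse$, the \emph{current} block with label $i$ is genuinely stable — relying on Fact~\ref{fact:stable} to carry stability from the iteration in which $i$ was last a splitter, and on the fact that splitting sets the flag back to $\mtrue$ so the flag cannot be stale. A secondary subtlety is the concurrent-write semantics in the splitter-selection step (arbitrary/priority), but this only affects \emph{which} unstable block is chosen, not the invariant, so it does not obstruct the correctness argument, only termination (which is argued separately, analogously to the sequential case).
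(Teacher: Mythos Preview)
Your proposal is correct and follows essentially the same approach as the paper: both argue that termination forces all $\mathit{unstable}$ flags to $\mfalse$, that this flag correctly tracks stability of the current partition under the corresponding block (via Fact~\ref{fact:stable} and the observation that any split resets the flag to $\mtrue$), and then that a stable partition induces a bisimulation by the standard stability-to-bisimulation argument. Your treatment is in fact more explicit than the paper's about the invariant linking $\mathit{unstable}_i = \mfalse$ to stability under the block labelled $i$; one small caveat is that the phrase ``a subset of a stable block is stable'' is not quite the right justification---the correct point (which you also state) is that whenever the block labelled $C$ shrinks it is because $C$ was split, which forces $\mathit{unstable}_C := \mtrue$, so the flag cannot be stale.
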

\begin{proof}
	Since the program finished, we know that for all block indices $i\in \blocklabel$ we have $\mathit{unstable}_i = \mfalse$. For a block index $i \in \blocklabel$, $\mathit{unstable}_i$ is set to $\mfalse$ if the partition $\pi_k$, after iteration $k$, is stable under the block with index $i$ and set to $\mtrue$ if it is split. So, by Fact~\ref{fact:stable}, we know $\partit_n$ is stable under every block $B$, hence stable. Next, we prove that a stable partition is a bisimulation relation.
	
	We show that the relation $R$ induced by $\partit_n$ is a bisimulation relation. Assume states $s,t\in S$ with $sRt$ are in block $B\in \partit_n$. Consider a transition $s\rightarrow s'$ with $s'\in S$. State $s'$ is in some block $B' \in \partit_n$, and since the partition is stable under block $B'$, and $s$ is able to reach $B'$, by the definition of stability, we know that $t$ is also able to reach $B'$. Therefore, there must be a state $t'\in B'$ such that $t\rightarrow t'$ and $s'Rt'$. Finally, by the fact that $R$ is an equivalence relation we know that $R$ is also symmetric, therefore it is a bisimulation relation. 
\end{proof}

\begin{thm}\label{thm:correctness}
	The partition resulting from executing Algorithm~\ref{lst:alg} forms the coarsest relational partition for a set of states $S$ and a transition relation $\rightarrow: S \times S$, solving RCPP.  
\end{thm}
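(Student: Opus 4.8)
The plan is to assemble the theorem from the two lemmas already in hand, plus a termination argument. The statement ``the partition resulting from executing Algorithm~\ref{lst:alg} forms the coarsest relational partition'' unpacks into three obligations: (i) the algorithm terminates, so that a final partition $\partit_n$ is well-defined; (ii) $\partit_n$ induces a bisimulation relation; and (iii) $\partit_n$ is the \emph{coarsest} refinement of $\partit_0$ that does so. Obligations (ii) and (iii) are essentially discharged already — (ii) is exactly Lemma~\ref{lemma:bisimulation}, and (iii) follows from Lemma~\ref{lemma:invariant} — so the real work is (i) and then stitching the pieces together cleanly.

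First I would argue termination. Each partition $\partit_{k+1}$ in the produced sequence is a refinement of $\partit_k$, because the only modification to the $\mathit{block}_i$ variables is a split (line~\ref{alg:leader} and the following lines), never a merge; this is already observed in the text just before Lemma~\ref{lemma:invariant}. Since $|S| = n$ is finite, a chain of strict refinements has length at most $n$, so there are only finitely many iterations in which an actual split occurs. It remains to see that the loop cannot run forever performing ``empty'' iterations: I would show that in any iteration where no block is split, every $\mathit{unstable}_i$ flag that was true either gets cleared (the chosen splitter $C$ has its flag set to $\mfalse$ at line~\ref{alg:stability_c}) and none gets newly set (new flags are only raised inside the split branch), so the number of blocks flagged unstable strictly decreases; once all such flags are $\mfalse$, the splitter selection leaves $C = \bot$ and the \textbf{do--while} guard fails. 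Combining, after $\bigO(n)$ iterations the algorithm halts, which also justifies the name $\partit_n$ used in Lemma~\ref{lemma:bisimulation} and the theorem.

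Next I would invoke Lemma~\ref{lemma:bisimulation} to conclude that the relation $R$ induced by the terminal partition $\partit_n$ is a bisimulation relation, and it is a refinement of $\partit_0$ since every $\partit_k$ is (by the refinement-chain observation, transitively back to $\partit_0$). Finally, for coarsest-ness, suppose $\partit^\ast$ is any refinement of $\partit_0$ that induces a bisimulation; then $\bisim_{\pi_0}$, the relation induced by the coarsest such refinement, contains the relation induced by $\partit^\ast$, and in particular $\partit_n$ refines the $\bisim_{\pi_0}$-partition by Lemma~\ref{lemma:invariant} (instantiated at $k = n$: any two $\bisim_{\pi_0}$-related states sharing an initial block lie in a common block of $\partit_n$). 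Conversely, since $R$ induced by $\partit_n$ is itself a bisimulation refining $\partit_0$, the $\bisim_{\pi_0}$-partition refines $\partit_n$. The two refinements together force $\partit_n$ to equal the $\bisim_{\pi_0}$-partition, which is by definition the coarsest refinement of $\partit_0$ that is a bisimulation — exactly the required output of RCPP.

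The main obstacle I anticipate is the termination argument, and specifically ruling out nonterminating sequences of iterations that neither split any block nor clear the last unstable flag — i.e., being careful that the ``mark'' and ``split'' phases cannot conspire to re-raise an unstable flag on a block that is not actually split. One should also double-check a small subtlety in Lemma~\ref{lemma:invariant} that I would reuse: it is stated for $s,t$ in a common \emph{initial} block $B \in \partit_0$, but since $\bisim_{\pi_0}$ by definition refines $\partit_0$, any $\bisim_{\pi_0}$-related pair does share an initial block, so the hypothesis is automatically met and the lemma applies to all $\bisim_{\pi_0}$-related pairs. Everything else is bookkeeping over the refinement chain.
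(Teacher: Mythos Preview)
Your overall strategy matches the paper's: invoke Lemma~\ref{lemma:bisimulation} for ``is a bisimulation'' and Lemma~\ref{lemma:invariant} for ``is coarsest.'' The paper's proof is literally those two sentences; your added termination paragraph is not present there (termination is deferred to Theorem~\ref{thm:analysis}), but including it is harmless and arguably makes the theorem self-contained.

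There is, however, a genuine slip in your coarsest-ness paragraph: the two refinement directions are swapped. Lemma~\ref{lemma:invariant} says that $\bisim_{\pi_0}$-related states remain in a common block of $\partit_n$, i.e.\ ${\bisim_{\pi_0}} \subseteq R_n$ as relations; by the paper's definition of refinement this means the $\bisim_{\pi_0}$-partition refines $\partit_n$, not the other way round as you wrote. Conversely, the fact that $R_n$ is a bisimulation and $\partit_n$ refines $\partit_0$ gives $R_n \subseteq {\bisim_{\pi_0}}$ (since $\bisim_{\pi_0}$ is by definition the largest such relation), hence $\partit_n$ refines the $\bisim_{\pi_0}$-partition --- again the opposite of what you wrote. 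Your parenthetical explanations of each step are correct; it is only the refinement conclusions drawn from them that are reversed. The final equality still follows, so the argument is repaired by simply swapping the two attributions.
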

\begin{proof}
	By Lemma~\ref{lemma:bisimulation}, the resulting partition is a bisimulation relation. Lemma~\ref{lemma:invariant} implies that it is the coarsest refinement which is a bisimulation. 
\end{proof}
\subsection{Complexity analysis}
Every step in the body of the \textbf{while}-loop can be executed in constant time. So the asymptotic complexity of this algorithm is given by the number of iterations. 

\begin{thm}\label{thm:analysis}
	RCPP on an input with $m$ transitions and $n$ states is solved by Algorithm~\ref{lst:alg} in $\bigO(n)$ time using $max(m,n)$ CRCW PRAM processors.
\end{thm}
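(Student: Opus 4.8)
The plan is to separate correctness from cost. That the output is the coarsest relational partition solving RCPP is exactly Theorem~\ref{thm:correctness}, so here it only remains to establish the resource bounds. As noted just before the statement, initialisation (lines~1--6) and every pass through the body of the \textbf{do}--\textbf{while} loop consist of a fixed number of RAM instructions executed by the processors in lock-step --- the only concurrency being the writes to $C$, to the $\mathit{mark}$ array, and to the $\mathit{new\_leader}$ array, each resolved in $\bigO(1)$ under the arbitrary/priority convention --- so one iteration costs $\bigO(1)$ time. Moreover $\max(m,n)$ processors both suffice and are what the algorithm uses: the splitter-selection and split phases range over the $n$ state/block indices, while the marking phase ranges over the $m$ transition indices. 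Hence the theorem reduces to showing that the loop performs $\bigO(n)$ iterations.

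To count iterations I would introduce the potential $\Phi_k = |\{\, b \mid \mathit{unstable}_b = \mtrue \text{ at the start of iteration } k \,\}|$, noting $\Phi_0 = |\partit_0| \le n$ and $\Phi_k \ge 0$ always. Call an iteration \emph{non-trivial} if it selects a splitter, i.e.\ $C \neq \bot$ after the selection phase. In a non-trivial iteration the chosen splitter $C$ satisfies $\mathit{unstable}_C = \mtrue$ and is reset to $\mfalse$ on line~\ref{alg:stability_c}; the only statements that ever re-raise an $\mathit{unstable}$ bit to $\mtrue$ are the two assignments $\mathit{unstable}_{\mathit{block}_i} := \mtrue$ in the split branch, which fire, per split, for the shrunken old block and for the newly created block. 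A split happens for each block found to straddle the mark boundary, and each split mints a brand-new block leader (the state elected on line~\ref{alg:leader}); since block leaders are state indices, are never demoted once elected, and there are only $n$ states, the total number of splits over the entire run is at most $n - |\partit_0|$. Therefore, writing $s_k$ for the number of splits in iteration $k$, every non-trivial iteration satisfies $\Phi_{k+1} \le \Phi_k - 1 + 2 s_k$ (we delete $C$ from the unstable set and add at most $2 s_k$ blocks). Telescoping over all $T$ non-trivial iterations gives $0 \le \Phi_T \le \Phi_0 - T + 2\sum_k s_k \le |\partit_0| - T + 2(n - |\partit_0|)$, hence $T \le 2n - |\partit_0| \le 2n$. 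Adding the single final iteration in which $C$ stays $\bot$ and the loop exits, the body runs at most $2n + 1$ times, so the whole algorithm runs in $\bigO(n)$ time on $\max(m,n)$ processors.

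The step I expect to carry the real weight is the claim that $\mathit{unstable}$ bits can be re-raised only through a split: without it, ``no-split'' iterations --- those in which a block is picked as splitter only to find the partition already stable under it --- could conceivably recur without bound. Making this rigorous rests on two small observations I would prove explicitly: (i) once a state $c$ is elected leader, $\mathit{block}_c = c$ thereafter and $c$ stays the leader of the (shrinking) remnant of its block, so $\mathit{unstable}_b$ is a well-defined per-block quantity over time whose $\mtrue \mapsto \mfalse$ transitions are precisely the ones consumed by non-trivial iterations; and (ii) the splitter $C$ itself may be among the blocks split in its own iteration, so the accounting must permit $\mathit{unstable}_C$ to drop and then rise again within one iteration --- which the $+2s_k$ slack already absorbs. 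I would also record, for completeness, that whenever some block is unstable its leader's processor does write to $C$, so the loop never stops early; the remaining ingredients (constant cost of each line, adequacy of the processor bound) are then routine.
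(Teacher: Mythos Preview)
Your proposal is correct and follows essentially the same counting argument as the paper: both track the number of unstable blocks (your $\Phi_k$ is the paper's $U_k$), bound the increase per iteration by twice the number of splits (your $2s_k$ versus the paper's $l_k + l_k'$ with $l_k' \le l_k$), use that the total number of splits is at most $n - |\partit_0|$, and telescope to obtain the bound $2n - |\partit_0|$ on the number of iterations. Your potential-function phrasing and the explicit handling of the terminal $C=\bot$ iteration and the case where $C$ is itself split are a bit more careful than the paper's presentation, but the substance is the same.
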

\begin{proof}
	In iteration $k \in \Nat$ of the algorithm, let us call the total number of blocks $N_k \in\Nat$ and the number of unstable blocks $U_k \in\Nat$. Initially, $N_0 = U_0 = |\pi_0|$. In every iteration $k$, a number of blocks $l_k \in\Nat$ is split, resulting in $l_k$ new blocks, so the new total number of blocks at the end of iteration $k$ is $ N_{k+1} = N_k +l_k $. 
	
	First the current block $C$ in iteration $k$ which was unstable is set to stable which causes the number of unstable blocks to decrease by one. In this iteration $k$ the $l_k$ blocks $B_1, \dots, B_{l_k}$ are split, resulting in $l_k$ newly created blocks. These $l_k$ blocks are all unstable. A number of blocks $l_k' \leq l_k$ of the blocks $B_1, \dots B_{l_k}$, were stable and are set to unstable again. The block $C$ which was set to stable is possibly one of these $l_k'$ blocks which were stable and set to unstable again. The total number of unstable blocks at the end of iteration $k$ is $U_{k+1} = U_{k} + l_k + l_k' - 1$.
	
	% OLD: All $l_i$ new blocks are set to unstable, and some of the $l_i$ blocks from which the new blocks were split were stable and are set to unstable. We refer to this number with $l_i'$. The current block $C$, which was unstable at the start of iteration $i$ is at the end of the iteration either stable or split into unstable blocks. This means the new number of unstable blocks is $U_{i+1} = U_{i} - 1 +l_i + l_i'$ (the `$-1$' refers to the splitter becoming stable, but if this did not happen in iteration $i$ then this splitter block will be interpreted as being stable in the previous iteration and becoming unstable again in the current iteration, by which $l_i'$ cancels out the $-1$). %\todo{PH: unclear: ``compensate for this'', does not seem to go well with the `if', JM: edited but still iffy, maybe try to unify with the labelled proof})).
	For all $k\in \Nat$, in iteration $k$ we calculate the total number of blocks $N_k = \sum_{i=0}^{k-1}(l_i)+|\pi_0|$ and unstable blocks $U_k = \sum_{i=0}^{k-1}(l_i + l_i') - k + |\pi_0|$. The number of iterations is given by $k = \sum_{i=0}^{k-1}(l_i + l_i') - U_k + |\pi_0|$. By definition, $l_i' \leq l_i$, and the total number of newly created blocks is $\sum_{i=0}^{k-1}(l_i) = N_k - |\pi_0|$, hence $\sum_{i=0}^{k-1}(l_i + l_i')\leq2\sum_{i=0}^{k-1}(l_i)\leq2N_k-2|\pi_0|$. The number of unstable blocks is always positive, i.e., $U_k \geq 0$, and the total number of blocks can never be larger than the number of states, i.e.,  $N_k \leq n$, so the total number of iterations $z$ is bounded by  $z \leq 2N_z - |\pi_0| \leq 2n - |\pi_0|$.
	
	%\begin{align}
	%\sum_{i=0}^{k-1}(l_i + l_i') &\leq 2N_k - 2 |\pi_0| \leq 2n - 2|\pi_0| \label{eq:line4} \\
	%k=\sum_{i=0}^{k-1} (l_i+l_i')+|\pi_0|-U_k &\leq 2N_k - |\pi_0| - U_k\leq 2N_k - |\pi_0| %\label{eq:line5}
	%\end{align}
	%Equation~(\ref{eq:line4}) is true, . Equation~(\ref{eq:line5}) is obtained by replacing the %sum with Equation~(\ref{eq:line4}) and applying the fact that $U_k \geq 0$. Since the %number of blocks never exceed the number of states $N_k \leq n$ the number of iterations is %bounded by $k\leq 2N_k - |\pi_0| \leq  2n - |\pi_0|$, or put more simply $2n$.
\end{proof}
%!TEX root = pram-bisimulation-paper.tex

\section{Bisimulation Coarsest Refinement Problem}\label{sec:labels}
In this section we extend our algorithm to the Bisimulation Coarsest Refinement Problem (\BCRP{}), i.e.,
to LTSs with multiple action labels. %The extended algorithm is given in Algorithm~\ref{lst:alg-lab}.

Solving \BCRP{} can in principle be done by translating an LTS to a Kripke structure, for instance by using the method described in~\cite{reniers2014results}. This translation introduces a new state for every transition,
resulting in a Kripke structure with $n+m$ states. 
If the number of transitions is significantly larger than the number of states, 
then the number of iterations of our algorithm increases undesirably.  

%To support labelled transition systems, some adjustments need to be made. The most straightforward way is to pick one action $a \in Act$ and apply Algorithm \ref{lst:alg} where only the transitions of that specific action label are considered. Next, you pick a new action label and you keep the partition of the last iteration and apply the algorithm again. This is repeated until all action labels are chosen. Afterwards, you need to repeat this until all blocks are stable for all action labels, which is the case if no splits happen for any action label. This approach has a severe draw-back, it keeps iterating all action labels, and if even one block is split again, all labels need to be checked again.

% Another approach is when we do consider all action labels at the same time for the block that is used as a splitter. With respect to algorithm \ref{lst:alg}, we keep track of a mark for each action $a \in Act$ for each state $s \in S$ and in the marking step we take into account the label of the transition.  Afterwards, a block will be split when it disagrees with any mark of the leader. The split block will now only be stable when it didn't split up any blocks since states that were split could still have different markings. This version keeps better track of the stability of blocks compared to the former version. Although the drawback here is that we need to store marks with the size of $|S| \times |Act|$, and if we have many actions, this becomes unpractical.
\subsection{The PRAM Algorithm}
% Pieter: keep track is a bit informal -> maintain
Instead of introducing more states, we introduce multiple marks per state, but in total we have no more than $m$ marks. For each state $s$, we use a mark variable for each different outgoing action label relevant for $s$, i.e., for each $a$ for which there is a transition $s \pijl{a} s'$ to some state $s'$. Each state may have a different set of outgoing action labels and thus a different set of marks. Therefore, we first perform a preprocessing procedure in which we group together states that have the same set of outgoing action labels. This is valid, since two bisimilar states must have the same outgoing actions. That two states of the same block have the same set of action labels is then an invariant of the algorithm, since in the sequence of produced partitions, each partition is a refinement of the previous one. For the extended algorithm, we need to maintain extra information in addition to the information needed for Algorithm~\ref{lst:alg}. For an input LTS $A = (S, Act, \pijl{})$ with $n$ states and $m$ transitions this is the following extra information:
\begin{enumerate}
	\item Each action label has an index $a \in \{0, \dots,|Act| -1\}$.
	\item The following is stored in lists of size $m$, for each transition $s\pijl{a} t$ with transition index $i \in \{0, \dots, m-1\}$: 
	\begin{enumerate}
		\item $a_i := a$
		\item $\mathit{order}_i : \Nat$, the order of this action label, with respect to the source state $s$. E.g., if a state s has the list $[1, 3, 6]$ of outgoing action labels, and transition $i$ has label $3$, then $\mathit{order}_i$ is 1 (we start counting from $0$).
	\end{enumerate}
	\item $\mathit{mark} : [\Bool]$, a list of up to $m$ marks, in which there is a mark for every state, action pair for which it holds that the state has at least one outgoing transition labelled with that action. This list can be interpreted as the concatenation of lists $\mathit{mark}_s$ for all states $s \in S$. Essentially, we have for each state $s \in S$: 
	\begin{enumerate}
		\item $\markoffset(s) : \Nat$, the offset to access the marks of a given state $s$ in $\mathit{mark}$.
		\item $\mathit{mark}_{\markoffset(s)} : [ \Bool ]$, a list of marks (the list starting at position $\markoffset(s)$ in $\mathit{mark}$), where each mark indicates if the state can reach the current block with the corresponding action. We also refer to this list as $\mathit{mark}_s$. E.g., if state $s$ has actions $[1, 3, 6]$ and only actions $1$ and $6$ can reach the current block, this list has the contents $[\true, \false, \true]$.
		\item $\mathit{nr\_marks}_s$, the number of marks this state has, thus the length of list $\mathit{mark}_s$.
%%		the marks will be stored in a continuous array, but each array can have a different size, thus to index $\mathit{mark}_s$ we need an offset, internally this means: $\mathit{marks}_s \equiv \mathit{marks}[\mathit{marks\_offset}[s]]$.
%		This offset is essentially the prefix sum of $\mathit{nr\_marks}$. 
	\end{enumerate}
	\item $\mathit{mark\_length}$: The total length of all the $\mathit{mark}_s$ lists together, i.e., the sum of all the $\mathit{nr\_marks}_s$. This allows us to reset all marks in constant time using $\mathit{mark\_length}$ processors. This number is not larger than the number of transitions ($\mathit{mark\_length} \leq m$).
	\item In a list of size $n$, we store for each state $s \in S$ a variable $\mathit{split}_s$. This indicates if the state will be split off from its block.
\end{enumerate}

With this extra information, we can alter Algorithm \ref{lst:alg} to work with labels. The new version is given in Algorithm \ref{lst:alg-lab}. The changes involve the following:
\begin{enumerate}
	\item Lines \ref{alg:reset-mark-1}-\ref{alg:reset-mark-2}: Reset the $\mathit{mark}$ list.
	\item Line \ref{alg:reset-split}: Reset the $\mathit{split}$ list.
	\item Line \ref{alg:mark-order}: When marking the transitions, we do this for the correct action label, using $\mathit{order}_i$. Note the indexing into $\mathit{mark}$. It involves the offset for the state $\mathit{source}_i$, and $\mathit{order}_i$.
	\item Lines \ref{alg:new-mark-1}-\ref{alg:new-mark-2}: We tag a state to be splitted off when it differs for any action label from the block leader.
	\item Line \ref{alg:check-mark}: If a state was tagged to be splitted off in the previous step, it should split off from its leader.
	\item Line \ref{alg:unstable}: If any block was split, the partition may not be stable w.r.t.\ the splitter.
\end{enumerate}

\begin{algorithm}[!t]
	\SetAlgoLined
	\SetKwRepeat{Do}{do}{while}
	\If{$i < n$}{
		 $\mathit{unstable}_i := \mfalse$\;
		%$\mathit{block}_i := 0$\;
		$\mathit{unstable}_{\mathit{block}_i} := \mtrue$\;
	}
	\Do{$C \neq\bot$}{
		$C := \bot$\;
		\textcolor{blue}{\If{$i < \mathit{mark\_length}$ \label{alg:reset-mark-1}} {
			$\textit{mark}_i := \false$\;
		}\label{alg:reset-mark-2}}
		\If{$i < n$ }{
			\textcolor{blue}{$split_i := \false$\;}\label{alg:reset-split}
			\If{$\mathit{unstable}_i$}
			{$C := i$\;}
		}\label{alg:reset-all}
		\If{$i < m$ and $\mathit{block}_{\mathit{target}_i} = C$} {\label{alg:if_reachable-1}
			\textcolor{blue}{$\mathit{mark}_{\markoffset({\mathit{source}_i})+\mathit{order}_i} := \true$\;}\label{alg:mark-order}
		}\label{alg:mark-order-done}
		\textcolor{blue}{\If{$i < m$ and $\mathit{mark}_{\markoffset({\mathit{source}_i})+\mathit{order}_i} \neq \mathit{mark}_{\markoffset(\mathit{block}_{\mathit{source}_i})+\mathit{order}_i}$ \label{alg:new-mark-1}} {
			$\mathit{split}_{\mathit{source}_i} := \true$\;
		}\label{alg:new-mark-2}}
		\If{$i < n$ and $C \neq \bot$}{
			$\mathit{unstable}_{C} := \mfalse$\;
			\If{\textcolor{blue}{$\mathit{split}_{i}$} \label{alg:check-mark}}{
				$\mathit{new\_leader}_{\mathit{block}_i} := i$\;\label{alg:leader-elect}
				$\mathit{unstable}_{\mathit{block}_i} := \mtrue$\;
				$\mathit{block}_i := \mathit{new\_leader}_{\mathit{block}_i}$\; 
				$\mathit{unstable}_{\mathit{block}_i} := \mtrue$\;		
				\textcolor{blue}{$\mathit{unstable}_C := \mtrue$\;}\label{alg:unstable}
			}
			
		}
	}
	\caption{\label{lst:alg-lab} The Algorithm for BCCP, the highlighted lines differ from Algorithm \ref{lst:alg}.}
\end{algorithm}

\begin{figure}[!t]
\centering
\begin{subfigure}{0.45\textwidth}
	\scalebox{1}{	
		\begin{tikzpicture}[->,shorten >=1pt,auto,node distance=1.25cm]
		
		\node[state] (A)                    {$s_0$};
		\node[state]         (B) [below left = 1.5cm and 1.5cm of A] {$s_1$};
		\node[state]         (C) [right = 3 cm of B] {$s_2$};
		
		\path (A) edge[bend left]  node {a} (B)
		(B) edge[bend left]  node {a} (A)
		(B) edge[bend right]  node {b} (C)
		
		(A) edge[bend left]  node {a,c} (C)
		(B) edge [loop below]  node {c} (B)
		(C) edge[bend left]  node {c} (A)
		(C) edge [loop below]  node {c} (C)
		;
		\end{tikzpicture}
	}
\end{subfigure}
\begin{subfigure}{0.45\textwidth}
	\begin{tabular}[b]{l|lll|lll|ll}
		$\mathit{source}_i$         & 0 & 0 & 0 & 1 & 1 & 1 & 2 & 2 \\
		$a_i$              & a & a & c & a & b & c & c & c \\ \hline
		$\mathit{action\_switch}_i$ & 0 & 0 & 1 & 0 & 1 & 1 & 0 & 0 \\ \hline
		$\mathit{order}_i$          & 0 & 0 & 1 & 0 & 1 & 2 & 0 & 0 \\
		$\mathit{nr\_mark}_s$       &   &   & 2 &   &   & 3 &   & 1 \\ \hline
		$\markoffset$    &   &   & 2 &   &   & 5 &   & 6
	\end{tabular}%
\end{subfigure}
	\caption{An example LTS and its derived preprocessing information.\label{fig:preprocessing}}
\end{figure}
\subsubsection{Preprocessing.}
% Pieter: go over a bit informal -> visit?
To use the above algorithm, we need to do two preprocessing steps. First, we need to partition the states w.r.t.\ their set of outgoing action labels. This can be done with an altered version of Algorithm \ref{lst:alg}. Instead of splitting on a block at line \ref{alg:reachable}, we split on an action $a \in A$. We visit all transitions, and we mark the source if it has the same action label $a$. This can be found in Algorithm \ref{AlgAction}.

\begin{algorithm}[h]
	\setcounter{AlgoLine}{14}
	\SetAlgoLined
	\If{$i < m$ and $a_{i} = a$} {
		$\mathit{mark}_{\mathit{source_i}} := \true$\;
	}
	\caption{Marking the source per action label $a_i$.}
	\label{AlgAction}
\end{algorithm}

After executing Algorithm \ref{AlgAction}, each block can split in two blocks: a block that contains states that have $a$ as an outgoing action label and a block with states that do not have this outgoing action label. After doing this for all different action labels we end up with a partition of blocks, in which all states of a block have the same set of outgoing action labels, and each pair of states from different blocks have different sets of outgoing action labels. Using $m$ processors, this partition can be constructed in $\bigO(|Act|)$ time.

For the second preprocessing step, we need to gather the extra information that is needed in Algorithm~\ref{lst:alg-lab}. Only $a_i$ is part of the input, the others need to be calculated. We start our preprocessing by sorting the transitions by $(\mathit{source}_i, a_i)$, which can be done in $\bigO(\log m)$ time with $m$ processors, for instance using a parallel merge sort \cite{cole1988parallel}. In order to calculate $\mathit{order}_i$ and $\mathit{nr\_marks}_s$, we first calculate $\mathit{action\_switch}_i$ for each transition $i$, which is done in Algorithm~\ref{lst:preprocess}. See Figure \ref{fig:preprocessing} for an example. Now, $\mathit{order}_i$ can be calculated with a parallel segmented prefix sum~\cite{parallelscan} (also called a segmented scan) of $\mathit{action\_switch}$. A parallel segmented sum can be performed on $\mathit{action\_switch}$ to calculate $\mathit{nr\_marks}_s$, where we make sure to set $\mathit{nr\_marks}_s$ to $0$, if state $s$ has no outgoing transitions.
Finally, $\markoffset_s$, for the mark offsets, can be constructed as a list and calculated by applying a parallel prefix sum on $\mathit{nr\_marks}_s$. 
 The code in Algorithm~\ref{lst:preprocess} takes $\bigO(1)$ time on $m$ processors, and a parallel segmented (prefix) sum takes $\bigO(\log m)$ time~\cite{parallelscan}.

In total the preprocessing takes $\bigO(|Act| + \log m)$ time.

\begin{algorithm}[t]
	\SetAlgoLined
	\If{i $\leq$ m}{
		\uIf{ $i = 0$ or $\mathit{source}_i \neq \mathit{source}_{i-1}$ or $a_i = a_{i-1}$}{
			$\mathit{action\_switch}_i = 0$;
			}
		\Else{
			$\mathit{action\_switch}_i = 1;$
		}
	}
\caption{\label{lst:preprocess} Preprocessing step needed for Algorithm~\ref{lst:alg-lab}. We calculate $\mathit{action\_switch}_i$, which is needed for the $\mathit{order}_i$ and $\mathit{nr\_marks}_s$ variables.}
\end{algorithm}

\subsection{Complexity \& Correctness}\label{sec:lab-complex-correct}
For Algorithm \ref{lst:alg-lab}, we need to prove why it takes a linear number of steps to construct the final partition. This is subtle, as an iteration of the algorithm does not necessarily produce a stable block.

\begin{thm}
	Algorithm \ref{lst:alg-lab} on an input LTS with $n$ states and $m$ transitions will terminate in $\bigO(n + |Act|)$ steps.
\end{thm}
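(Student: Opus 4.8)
The goal is to bound the number of iterations of the main \textbf{do}-while loop of Algorithm~\ref{lst:alg-lab} by $\bigO(n + |Act|)$, which combined with the $\bigO(|Act| + \log m)$ preprocessing and the fact that each loop body runs in $\bigO(1)$ time gives the stated bound. The subtlety flagged in the text is that, unlike in Algorithm~\ref{lst:alg}, a single iteration need not make the partition stable with respect to the chosen splitter $C$: a state may get split off from its block because of a difference in \emph{some} action mark, while $C$ itself is re-marked unstable at line~\ref{alg:unstable}, so the same $C$ can be selected again in a later iteration without any block being split at all. Hence the accounting from Theorem~\ref{thm:analysis} does not directly transfer, and the plan is to refine that accounting.

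First I would set up the same bookkeeping as in the proof of Theorem~\ref{thm:analysis}: let $N_k$ be the number of blocks and $U_k$ the number of unstable blocks at the start of iteration $k$, with $N_0 = U_0 = |\pi_0|$ (where $\pi_0$ here is the partition after the action-label preprocessing). The key quantity to control is the number of \emph{wasted} iterations — those in which no block is split. I would classify each iteration $k$ by whether $l_k$, the number of blocks split in iteration $k$, is zero or positive. For a productive iteration ($l_k \geq 1$), the argument of Theorem~\ref{thm:analysis} applies essentially verbatim: the total number of productive iterations is $\bigO(N_z) = \bigO(n)$ because each contributes at least one new block and $N_k \le n$ always, and each productive iteration increases $U$ by at most $2l_k + 1$ (the $+1$ only when $C$ is re-marked unstable), so the total unstable-count created across all productive iterations is $\bigO(n)$. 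For a wasted iteration ($l_k = 0$), no block is re-marked unstable by line~\ref{alg:unstable} (that line is only reached when some $\mathit{split}_i$ holds), while line~\ref{alg:stability_c}/\ref{alg:unstable}'s counterpart sets $\mathit{unstable}_C := \mfalse$; so a wasted iteration strictly decreases $U_k$ by exactly one and changes nothing else. Therefore the number of wasted iterations is bounded by the total amount of ``unstable mass'' ever created, which is $|\pi_0| + \sum_k (\text{increase in } U \text{ at iteration } k) = \bigO(n + |\pi_0|) = \bigO(n)$, since $|\pi_0| \le n$.

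Putting these together: productive iterations number $\bigO(n)$, wasted iterations number $\bigO(n)$, so the loop runs $\bigO(n)$ times; adding the $\bigO(|Act| + \log m)$ preprocessing and noting $\log m = \bigO(n + |Act|)$ is not automatic — but $m \le n^2 \cdot |Act|$ is not quite what we want either, so I would instead simply report the loop as $\bigO(n)$ and the preprocessing separately, and observe that the action-label partitioning step (Algorithm~\ref{AlgAction} run for each $a$) costs $\bigO(|Act|)$, giving the overall $\bigO(n + |Act|)$ after absorbing $\log m$ into it (or stating the bound as $\bigO(n + |Act| + \log m)$ and remarking $\log m \le \log(n|Act|) = \bigO(n+|Act|)$). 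The main obstacle is making the wasted-iteration bound airtight: I must verify carefully that in a wasted iteration $U$ genuinely decreases (i.e. that $\mathit{unstable}_C$ is not resurrected when $l_k = 0$, which requires checking that line~\ref{alg:unstable} is guarded by $\mathit{split}_i$ and that no \emph{other} block has its $\mathit{unstable}$ flag set to $\mtrue$ when nothing splits), and that the ``unstable mass'' created is correctly tied to the $\bigO(n)$ bound on new blocks — i.e. that each split event creates only a bounded number of newly-unstable blocks. Once that invariant is pinned down, the amortized counting closes exactly as in Theorem~\ref{thm:analysis}.
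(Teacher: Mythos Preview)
Your proposal is correct and follows essentially the same approach as the paper: the same bookkeeping with $N_k$, $U_k$, $l_k$, the same dichotomy into productive ($l_k\geq 1$) versus wasted ($l_k=0$) iterations, and the same amortization bounding wasted iterations by the total unstable mass ever created, which is $\bigO(n)$. Two minor slips are harmless for the asymptotics: the ``$+1$'' you attribute to $C$ being re-marked unstable is actually net zero (since $C$ was already unstable before line~\ref{alg:stability_c}), and your bound $m\le n\,|Act|$ is false in general (it should be $m\le n^2|Act|$ as the paper uses), but $\log(n^2|Act|)=\bigO(n+|Act|)$ still absorbs the $\log m$ term.
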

\begin{proof}
	The total preprocessing takes $\bigO(|Act| + \log m)$ steps, after which the \textbf{while}-loop will be executed on a partitioning $\pi_0$ which was the result of the preprocessing on the partition $\{S\}$. Every iteration of the \textbf{while}-loop is still executed in constant time. Using the structure of the proof of Theorem~\ref{thm:analysis}, we derive a bound on the number of iterations.
	
	At the start of iteration $k\in\Nat$ the total number of blocks and unstable blocks are $N_k,U_k\in \Nat$, initially $U_0 = N_0 = |\pi_0|$. In iteration $k$, a number $l_k$ of blocks is split in two blocks, resulting in $l_k$ new blocks, meaning that $N_{k+1} = N_{k} + l_k$. All new $l_k$ blocks are unstable and a number $l_k' \leq l_k$ of the old blocks that are split, were stable at the start of iteration $k$ and now unstable. If $l_k = l_k' = 0$ there are no blocks split and the current block $C$ becomes stable. We indicate this with a variable $c_k$: $c_k=1$ if $l_k = 0$, and $c_k = 0$, otherwise. The total number of iterations up to iteration $k$ in which no block is split is given by $\sum_{i=0}^{k-1} c_i$. The number of iterations in which at least one block is split is given by $k - \sum_{i=0}^{k-1} c_i$.
	
	If in an iteration $k$ at least one block is split, the total number of blocks at the end of iteration $k$ is strictly higher than at the beginning, hence for all $k\in \Nat$, $N_k \geq k - \sum_{i=0}^{k-1}c_i$. Hence, $N_k+\sum_{i=0}^{k-1}c_i$ is an upper bound for $k$.
	
	We derive an upper bound for the number of iterations in which no blocks are split using the total number of unstable blocks. In iteration $k$  there are $U_k = \sum_{i=0}^{k-1}(l_i + l_i') - \sum_{i=0}^{k-1} c_i + |\pi_0|$ unstable blocks. Since the sum of newly created blocks $\sum_{i=0}^{k-1}(l_i) = N_k-|\pi_0|$ and $l_i' \leq l_i$, the number of unstable blocks $U_k$ is bounded by $2N_k-\sum_{i=0}^{k-1}c_i - |\pi_0|$. Since $U_k\geq 0$ we have the bound $\sum_{i=0}^{k-1}c_i\leq2N_k-|\pi_0|$. This gives the bound on the total number of iterations $z \leq 3N_z-|\pi_0| \leq 3n - |\pi_0|$. 
	
	With the time for preprocessing this makes the total run time complexity $\bigO(n + |Act| + \log m)$. Since the total number of transitions $m$ is bounded by $|Act| \times n^2$, this simplifies to $\bigO(n + |Act|)$.
\end{proof}

Concerning correctness, we need to address two things. Firstly, as argued above, we start with a different partition compared to Algorithm~\ref{lst:alg}, but it is a valid choice since states with different outgoing labels can never be bisimilar. Secondly, although the partition may not become stable w.r.t.\ the splitter, this will eventually occur, and the algorithm will only stop once the partition is stable w.r.t.\ all blocks.  Therefore, the algorithm will produce the coarsest bisimulation relation.

\section{Experimental Results}\label{sec:experiments}
% !TEX root =  main.tex
%TODO: replace dropbox link with github link or figshare, if no longer anonymous
In order to validate the proposed algorithm, we implemented Algorithm \ref{lst:alg-lab} from Section \ref{sec:labels}. The implementation targets graphics processing units (GPUs) since a GPU closely resembles a PRAM and supports a large amount of parallelism. The algorithm is implemented in CUDA version 11.1 with use of the Thrust library.\footnote{The source code can be found at https://github.com/sakehl/gpu-bisimulation.} As input, we chose all benchmarks of the VLTS benchmark suite\footnote{https://cadp.inria.fr/resources/vlts/.} for which the implementation produced a result within 10 minutes. The VLTS benchmarks are LTSs that have been derived from real concurrent system models. 

The experiments were run on an  NVIDIA Titan RTX
with 24 GB memory and 72 Streaming Multiprocessors, each supporting up to 1,024 threads in flight.  Although this GPU supports 73,728 threads in flight, it is very common to launch a GPU program with one or even several orders of magnitude more threads, in particular to achieve load balancing between the Streaming Multiprocessors and to hide memory latencies. In fact, the performance of a GPU program usually relies on that many threads being launched.

Our implementation is purely a proof of concept, to show that our algorithm can be 
mapped to actual hardware and to understand how the algorithm 
scales with the number of states and transitions.

% !TEX root =  pram-bisimulation-paper.tex

% Please add the following required packages to your document preamble:
% \usepackage{graphicx}
\begin{table*}[tb]
	\centering
	%\caption{The running times and number of iterations ($\#\mathit{It}$) for the implementation of Algorithm \ref{lst:alg-lab} on the GPU. The identifier are different LTSs from the VLTS benchmark suite, the $|\mathit{Blocks}|$ column indicates the number of different blocks at the end of the algorithm, where each block contains only bisimilar states. The time and iterations are averaged over 11 runs, the times are in ms.}
	
	\resizebox{\textwidth}{!}{%
\begin{tabular}{l|r|r|r|r|r|r|r|r|r|r|r|r}
	\hline
%Old header, not following the symbols in the text. 	
%	\textbf{Identifier} & \textbf{N} & \textbf{M} & \textbf{|Act|} & \textbf{|Blocks|} & \textbf{\#Iterations} & \textbf{Pre(ms)} & \textbf{Alg(ms)} & \textbf{Total(ms)} & \textbf{\#It/N} & \textbf{\begin{tabular}[c]{@{}r@{}}\#It/\\ |Blocks|\end{tabular}} & \textbf{T/N} & \textbf{A/\#It} \\ \hline
	Benchmark name & 
	\multicolumn{1}{l|}{$n$} & 
	\multicolumn{1}{l|}{$m$} & 
	\multicolumn{1}{l|}{$|\mathit{Act}|$} & 
	\multicolumn{1}{l|}{$|\mathit{Blocks}|$} & 
	\multicolumn{1}{l|}{$\#\mathit{It}$} & 
	\multicolumn{1}{l|}{$T_\mathit{pre}$} &
	\multicolumn{1}{l|}{$T_\mathit{alg}$} &
	\multicolumn{1}{l|}{$T_\mathit{total}$} & 
	$\#\mathit{It}/n$ & 
	\begin{tabular}[c]{@{}l@{}}$\#\mathit{It}/$\\ $|\mathit{Blocks}|$\end{tabular} & 
	$T_\mathit{total}/n$ & 
	$T_\mathit{alg}/\#\mathit{It}$ \\ 
	\hline
Vasy\_0\_1        & 289       & 1,224      & 2      & 9        & 16      & 0.50      & 0.37      & 0.87        & 0.06   & 1.78                                                     & 0.003         & 0.023          \\
Cwi\_1\_2         & 1,952     & 2,387      & 26     & 1,132    & 2,786   & 0.63      & 56.5      & 57.1        & 1.43   & 2.46                                                     & 0.029         & 0.020          \\
Vasy\_1\_4        & 1,183     & 4,464      & 6      & 28       & 45      & 0.56      & 1.01      & 1.58        & 0.04   & 1.61                                                     & 0.001         & 0.022          \\
Cwi\_3\_14        & 3,996     & 14,552     & 2      & 62       & 122     & 0.63      & 2.68      & 3.30        & 0.03   & 1.97                                                     & 0.001         & 0.022          \\
Vasy\_5\_9        & 5,486     & 9,676      & 31     & 145      & 193     & 0.84      & 4.22      & 5.06        & 0.04   & 1.33                                                     & 0.001         & 0.022          \\
Vasy\_8\_24       & 8,879     & 24,411     & 11     & 416      & 664     & 0.70      & 13.9      & 15          & 0.07   & 1.59                                                     & 0.002         & 0.021          \\
Vasy\_8\_38       & 8,921     & 38,424     & 81     & 219      & 319     & 1.12      & 6.64      & 7.76        & 0.04   & 1.46                                                     & 0.001         & 0.021          \\
Vasy\_10\_56      & 10,849    & 56,156     & 12     & 2,112    & 3,970   & 0.73      & 82.0      & 82.7        & 0.37   & 1.88                                                     & 0.008         & 0.021          \\
Vasy\_18\_73      & 18,746    & 73,043     & 17     & 4,087    & 6,882   & 1.01      & 142       & 143         & 0.37   & 1.68                                                     & 0.008         & 0.021          \\
Vasy\_25\_25      & 25,217    & 25,216     & 25,216 & 25,217   & 25,218  & 159       & 519       & 678         & 1.00   & 1.00                                                     & 0.027         & 0.021          \\
Vasy\_40\_60      & 40,006    & 60,007     & 3      & 40,006   & 87,823  & 0.87      & 1,810     & 1,811       & 2.20   & 2.20                                                     & 0.045         & 0.021          \\
Vasy\_52\_318     & 52,268    & 318,126    & 17     & 8,142    & 15,985  & 2.52      & 338       & 340         & 0.31   & 1.96                                                     & 0.007         & 0.021          \\
Vasy\_65\_2621    & 65,537    & 2,621,480  & 72     & 65,536   & 98,730  & 12.2      & 10,050    & 10,060      & 1.51   & 1.51                                                     & 0.154         & 0.102          \\
Vasy\_66\_1302    & 66,929    & 1,302,664  & 81     & 66,929   & 91,120  & 6.70      & 5,745     & 5,752       & 1.36   & 1.36                                                     & 0.086         & 0.063          \\
Vasy\_69\_520     & 69,754    & 520,633    & 135    & 69,754   & 113,246 & 4.13      & 3,780     & 3,780       & 1.62   & 1.62                                                     & 0.054         & 0.033          \\
Vasy\_83\_325     & 83,436    & 325,584    & 211    & 83,436   & 148,012 & 4.41      & 3,093     & 3,097       & 1.77   & 1.77                                                     & 0.037         & 0.021          \\
Vasy\_116\_368    & 116,456   & 368,569    & 21     & 116,456  & 210,537 & 2.50      & 5,900     & 5,900       & 1.81   & 1.81                                                     & 0.051         & 0.028          \\
Cwi\_142\_925     & 142,472   & 925,429    & 7      & 3,410    & 5,118   & 4.85      & 238       & 243         & 0.04   & 1.50                                                     & 0.002         & 0.047          \\
Vasy\_157\_297    & 157,604   & 297,000    & 235    & 4,289    & 9,682   & 4.58      & 201       & 206         & 0.06   & 2.26                                                     & 0.001         & 0.021          \\
Vasy\_164\_1619   & 164,865   & 1,619,204  & 37     & 1,136    & 1,630   & 8.34      & 125       & 134         & 0.01   & 1.43                                                     & 0.001         & 0.077          \\
Vasy\_166\_651    & 166,464   & 651,168    & 211    & 83,436   & 145,029 & 6.13      & 5,710     & 5,720       & 0.87   & 1.74                                                     & 0.034         & 0.039          \\
Cwi\_214\_684     & 214,202   & 684,419    & 5      & 77,292   & 149,198 & 3.58      & 6,948     & 6,952       & 0.70   & 1.93                                                     & 0.032         & 0.047          \\
Cwi\_371\_641     & 371,804   & 641,565    & 61     & 33,994   & 85,858  & 4.72      & 4,050     & 4,050       & 0.23   & 2.53                                                     & 0.011         & 0.047          \\
Vasy\_386\_1171   & 386,496   & 1,171,872  & 73     & 113      & 199     & 7.38      & 14.0      & 21          & 0.00   & 1.76                                                     & 0.000         & 0.070          \\
Cwi\_566\_3984    & 566,640   & 3,984,157  & 11     & 15,518   & 23,774  & 16.0      & 3,707     & 3,723       & 0.04   & 1.53                                                     & 0.007         & 0.156          \\
Vasy\_574\_13561  & 574,057   & 13,561,040 & 141    & 3,577    & 5,860   & 71.5      & 3,770     & 3,841       & 0.01   & 1.64                                                     & 0.007         & 0.643          \\
Vasy\_720\_390    & 720,247   & 390,999    & 49     & 3,292    & 3,782   & 3.97      & 143       & 147         & 0.01   & 1.15                                                     & 0.0002        & 0.038          \\
Vasy\_1112\_5290  & 1,112,490 & 5,290,860  & 23     & 265      & 365     & 24.0      & 99.3      & 123         & 0.0003 & 1.38                                                     & 0.0001        & 0.272          \\
Cwi\_2165\_8723   & 2,165,446 & 8,723,465  & 26     & 31,906   & 66,132  & 37.0      & 23,660    & 23,700      & 0.03   & 2.07                                                     & 0.011         & 0.358          \\
Cwi\_2416\_17605  & 2,416,632 & 17,605,592 & 15     & 95,610   & 152,099 & 64.1      & 96,400    & 96,500      & 0.06   & 1.59                                                     & 0.040         & 0.634          \\
Vasy\_6020\_19353 & 6,020,550 & 19,353,474 & 511    & 7,168    & 12,262  & 221       & 11,690    & 11,910      & 0.002  & 1.71                                                     & 0.002         & 0.954          \\
Vasy\_6120\_11031 & 6,120,718 & 11,031,292 & 125    & 5,199    & 10,014  & 74.0      & 6,763     & 6,837       & 0.002  & 1.93                                                     & 0.001         & 0.675          \\
Vasy\_8082\_42933 & 8,082,905 & 42,933,110 & 211    & 408      & 660     & 281       & 1,149     & 1,429       & 0.0001 & 1.62                                                     & 0.0002        & 1.739                                    
\end{tabular}%
}
\caption{\label{tab:results}Benchmark results for  Algorithm \ref{lst:alg-lab} on a GPU, times ($T$) are in ms. }
	
\end{table*}

In the implementation, we have to make a few adjustments, since a GPU differs in some aspects from a PRAM. %A GPU does not run in a SIMD fashion but uses the Single Instruction Multiple Thread (SIMT) mode.
To make memory updates globally visible, we need to synchronize at certain points of Algorithm \ref{lst:alg-lab}, otherwise the changes in the memory are not consistent. We do this by splitting up the algorithm in different kernels (functions that execute in parallel on a GPU) since after a kernel run all processors (threads) are synchronized.

To be precise, in Algorithm \ref{lst:alg-lab} we need to synchronize after:
\begin{itemize}
	\item \textbf{Line \ref{alg:reset-all}}: To make sure the $\mathit{mark}$ and $\mathit{split}$ lists
	are reset and the splitter ($C$) is the same for all threads.
	\item \textbf{Line \ref{alg:mark-order-done}}: To make sure every thread has the same view of the $\mathit{mark}$ list.
	\item \textbf{Line \ref{alg:new-mark-2}}: To synchronize the $\mathit{mark}$ list.
	\item \textbf{Line \ref{alg:leader-elect}}: To make sure the next leader for states that split off ($\mathit{new\_leader}_{\mathit{block}_i}$) is chosen consistently among threads.
\end{itemize}
We have chosen to allow race conditions in our implementation, for instance at Line 6 where multiple blocks can mark themselves as current ($C$). Strictly speaking, this is not safe in the CUDA programming model, but it does work for 32 bit words. This can be easily adjusted using atomic instructions, although this will result in sequentializing write accesses to the same memory location, meaning that a write need not be in constant time anymore.

To ensure the implementation also works when $n$ and/or $m$ is larger than the number of threads $d$ on the GPU, we encapsulate the \textbf{if}-\textbf{then} blocks at lines 1-4, 7-9, 10-15, 16-18, 19-21 and 22-31 of Algorithm~\ref{lst:alg-lab} each in a \textbf{for}-loop, in which every thread accesses not only the data elements associated with its global index $i$, but also, if needed, the elements with index $i+d$, $i+2d$, etc., as long as the indices are valid.

Table \ref{tab:results} shows the results of the experiments we conducted. The $|\mathit{Blocks}|$ column indicates the number of different blocks at the end of the algorithm, where each block contains only bisimilar states. With $\#\mathit{It}$
we refer to the number of \textbf{while}-loop iterations that were executed (see Algorithm \ref{lst:alg-lab}), 
before all blocks became stable. The $T_\mathit{pre}$ give the preprocessing times in milliseconds, which includes doing the memory transfers to the GPU, sorting the transitions and partitioning. The $T_\mathit{alg}$ give the times of the core algorithm, in milliseconds. The $T_\mathit{total}$ is the sum of the preprocessing and the algorithm, in milliseconds. We have not included the loading times for the files and the first CUDA API call that initializes the device.
We ran each benchmark 10 times and took the averages. The standard 
deviation of the total times varied between 0\% and 3\% of the average, thus 10 runs are sufficient. All the times are rounded with respect to the standard error of the mean.

We see that the bound as proven in Section \ref{sec:lab-complex-correct} ($k \leq 3n$) is indeed respected, $\#\mathit{It}/n$ is at most 2.20, and most of the time below that. The number of iterations is tightly related to the amount of blocks that the final partition has, the $\#\mathit{It} / |\mathit{Blocks}|$ column varies between 1.00 and 2.53. This can be understood by the fact that each iteration either splits one or more blocks or marks a block as stable, and all blocks must be checked on stability at least once. This also means that for certain LTSs the algorithm scales better than linearly in $n$. The preprocessing often takes the same amount of time (about a few milliseconds). Exceptions are those cases with a large number of actions and/or transitions.

Concerning the GPU run times, it is not true that each iteration takes the same amount of time. A GPU is not a perfect PRAM machine. There are two key differences. Firstly, we suspect that the algorithm is memory bound since it is performing a limited amount of computations. The memory accesses are irregular, i.e., random, which caches can partially compensate, but for sufficiently large $n$ and $m$, the caches cannot contain all the data. This means that as the LTSs become larger, memory accesses become relatively slower. Secondly, at a certain moment, the maximum number of threads that a GPU can run in parallel is achieved, and adding more threads will mean more run time. These two effects can best be seen in the column headed by $T_\mathit{alg}/\#\mathit{It}$, which corresponds to the time per iteration. The values are around $0.02$ up to $300,000$ transitions, but for a higher number of states and transitions, the amount of time per iteration increases.

\subsection{Experimental comparison}
\begin{figure*}[!tb]
	\centering
    \begin{subfigure}{0.5\textwidth} 
    	\centering
   		% [Results plotted\label{fig:fanout}]
		\includegraphics[scale=0.45]{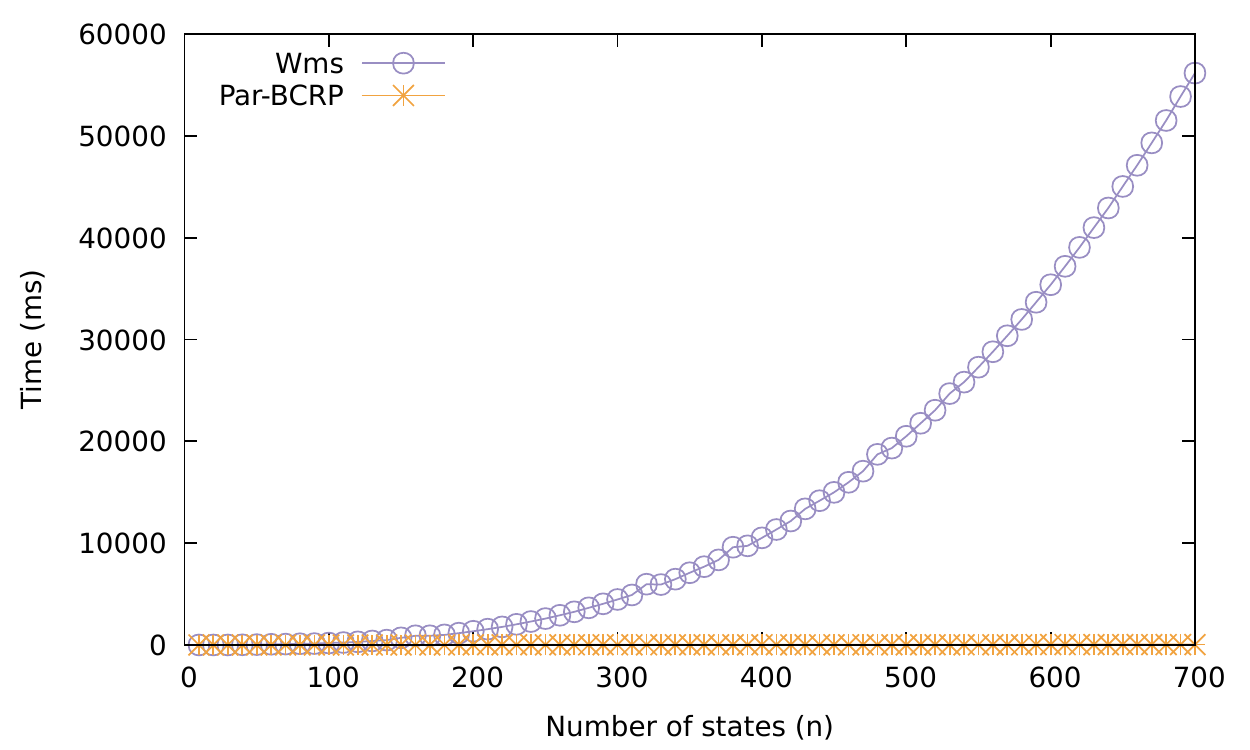}	
	\end{subfigure}%
 	\begin{subfigure}{0.5\textwidth}
 		\centering
 		%[Subset of the results of Figure~\ref{fig:fanout}] {
 		\scalebox{0.75}{
		\begin{tabular}{l|ll}
			States         & Run time Wms (ms) & Par-BCRP (ms) \\ \hline
			10 & 1 & 1 \\
			100 & 182 & 5\\
			200 & 1350 & 9\\
			300 & 4463 & 13\\
			400 & 10519 & 18\\
			500 & 20508 & 22\\
			600 & 35392 & 26\\
			700 & 56183 & 30\\
					\multicolumn{3}{c}{}\vspace{1ex}\\
		\end{tabular}}
	\end{subfigure}%
	\caption{Run times of Par-BCRP on the LTS $\mathit{Fan\_out}_n$.\label{fig:slow_results}}
\end{figure*}
We compared our implementation (Par-BCRP) with an implementation of the algorithm by Lee and Rajasekaran (LR)~\cite{lee_1994_RCPP} on GPUs, and the optimized GPU implementation by Wijs based on \emph{signature-based} bisimilarity checking~\cite{blom2003distributed}, with \textit{multi-way splitting} (Wms) and with \textit{single-way splitting} (Wss)~\cite{wijs_2015}.  Multi-way splitting indicates that a block is split in multiple blocks at once, which is achieved in signature-based algorithms by computing a signature for each state in every partition refinement iteration, and splitting each block off into sets of states, each containing all the states with the same signature. The signature of a state is derived from the labels of the blocks that this state can reach in the current partition.

% !TEX root =  pram-bisimulation-paper.tex

% Please add the following required packages to your document preamble:
% \usepackage{graphicx}
\begin{table}
	\centering
	%\caption{The running times and number of iterations ($\#\mathit{It}$) for the implementation of Algorithm \ref{lst:alg-lab} on the GPU. The identifier are different LTSs from the VLTS benchmark suite, the $|\mathit{Blocks}|$ column indicates the number of different blocks at the end of the algorithm, where each block contains only bisimilar states. The time and iterations are averaged over 11 runs, the times are in ms.}
	
\begin{tabular}{l|r|r|r|r}
	\hline
%Old header, not following the symbols in the text. 	
%	\textbf{Identifier} & \textbf{N} & \textbf{M} & \textbf{|Act|} & \textbf{|Blocks|} & \textbf{\#Iterations} & \textbf{Pre(ms)} & \textbf{Alg(ms)} & \textbf{Total(ms)} & \textbf{\#It/N} & \textbf{\begin{tabular}[c]{@{}r@{}}\#It/\\ |Blocks|\end{tabular}} & \textbf{T/N} & \textbf{A/\#It} \\ \hline
Benchmark name    & LR      & Wms        & Wss          & Par-BCRP \\ \hline
Vasy\_0\_1        & 2.29    & 0.45       & 0.49         & 0.87                                                             \\
Cwi\_1\_2         & 17      & 21.8       & 18.8         & 57.1                                                             \\
Vasy\_1\_4        & 4.78    & 0.62       & 1.68         & 1.58                                                             \\
Cwi\_3\_14        & 60      & 3.72       & 3.80         & 3.30                                                             \\
Vasy\_5\_9        & 134     & 3.45       & 35.3         & 5.06                                                             \\
Vasy\_8\_24       & 277     & 3.03       & 31.5         & 15                                                               \\
Vasy\_8\_38       & 127     & 5.94       & 35.1         & 7.76                                                             \\
Vasy\_10\_56      & 860     & 4.6(0.2)   & 40.9         & 82.7                                                             \\
Vasy\_18\_73      & 1,354   & 21.7       & 211          & 143                                                              \\
Vasy\_25\_25      & 21,960  & 416        & t.o.         & 678                                                              \\
Vasy\_40\_60      & 17,710  & 1,230      & 1,290        & 1,811                                                            \\
Vasy\_52\_318     & 11,855  & 152(20)    & 368          & 340                                                              \\
Vasy\_65\_2621    & t.o.    & 1,230      & 27,000       & 10,060                                                           \\
Vasy\_66\_1302    & 480,600 & 240(20)    & 20,450       & 5,752                                                            \\
Vasy\_69\_520     & 94,800  & 35.4       & 16,090       & 3,780                                                            \\
Vasy\_83\_325     & 57,190  & 5,880      & 21,500       & 3,097                                                            \\
Vasy\_116\_368    & 80,900  & 2,930      & 6,360        & 5,900                                                            \\
Cwi\_142\_925     & 3,363   & 140(20)    & 220(30)      & 243                                                              \\
Vasy\_157\_297    & 1,058   & 579        & 1,240        & 206                                                              \\
Vasy\_164\_1619   & 8,173   & 46.8       & 470(30)      & 134                                                              \\
Vasy\_166\_651    & 80,210  & 9,560      & 29,660       & 5,720                                                            \\
Cwi\_214\_684     & 19,250  & 450(50)    & 440(30)      & 6,952                                                            \\
Cwi\_371\_641     & 26,940  & 1,548      & 6,970        & 4,050                                                            \\
Vasy\_386\_1171   & 334     & 34.8       & 30.6         & 21                                                               \\
Cwi\_566\_3984    & 98,200  & 2,200(200) & 6,700        & 3,723                                                            \\
Vasy\_574\_13561  & 144,810 & 1,853      & 11,700       & 3,841                                                            \\
Vasy\_720\_390    & 2,454   & 183        & 1,633        & 147                                                              \\
Vasy\_1112\_5290  & 4,570   & 36.8       & 293          & 123                                                              \\
Cwi\_2165\_8723   & 140,170 & 1,965      & 9,700        & 23,700                                                           \\
Cwi\_2416\_17605  & 257,200 & 15,300     & 16,300(1100) & 96,500                                                           \\
Vasy\_6020\_19353 & 107,900 & 19,230     & 34,000(2000) & 11,910                                                           \\
Vasy\_6120\_11031 & 55,750  & 1,280      & 7,010        & 6,837                                                            \\
Vasy\_8082\_42933 & 17,272  & 2,030      & 5,530        & 1,429                                                                                   
\end{tabular}%
\caption{\label{tab:results-compare}Comparison of the different algorithms with times in ms.}

\end{table}

The running times of the different algorithms can be found in Table~\ref{tab:results-compare}. Similarly to our previous benchmarks, the algorithms were run 10 times on the same machine using the same VLTS benchmark suite with a time-out of 10 minutes. In some cases, the non-deterministic behaviour of the algorithms Wms and Wss led to high variations in the runs. In cases where the standard error of the mean was more than 5\% of the mean value, we have added the standard error in Table~\ref{tab:results-compare} in between parentheses. Furthermore, all the results are rounded with respect to the standard error of the mean.
As a pre-processing step for the LR, Wms and Wss algorithms the input LTSs need to be sorted.
We did not include this in the times, nor the reading of files and the first CUDA API call (which initializes the GPU).
% Pieter: Why didn't we include the preprocessing, we did include it for our implementation, right?
% Lars: I checked, it differed at most like 60 ms (total time ~10.000 ms) for the biggest LTSes with a lot of transactions, but most of the time way less than that (like 1 ms of the 60 ms in total). Therefor the extra effort of rewriting of our code seemed not worth it. 

This comparison confirms the expectation that our algorithm in all cases (except one small LTS) out-performs LR. This confirms our expectation that LR is not suitable for massive parallel devices such as GPUs. 
Furthermore, the comparison teaches that in most cases our algorithm (Par-BCRP) outperforms Wss. In some benchmarks (Cwi\_1\_2, Cwi\_214\_684, Cwi\_2165\_8723 and Cwi\_2416\_17605) Wss is more than twice as fast, but in 16 other cases our algorithm is more than twice as fast.
The last comparison shows us that our algorithm does not out-perform Wms. Wms employs multi-way splitting which is known to be very effective in practice. 
Contrary to our implementation, Wms is highly optimized for GPUs while the focus of the current work 
is to improve the theoretical bounds and describe a general algorithm.

In order to understand the difference between Wms and our algorithm better, we analysed the complexity of Wms \cite{wijs_2015}. In general this algorithm is quadratic in time, and 
the linearity
claim in \cite{wijs_2015} depends on the assumption that the fan-out of `practical' transition
systems is bounded, i.e., every state has no more than $c$ outgoing transitions for $c$ a
(low) constant.

We designed the transition systems $\textit{Fan\_out}_n$ for $n\in\Nat^+$ to illustrate the difference. 
The LTS $\textit{Fan\_out}_n = \mbox{$(S, \{a,b\}, \pijl{})$}$ has $n$ states: $S=\{0,\dots,n-1\}$. The transition function contains $i \pijl{a} i+1$ for all states $1 <i< n-1$. Additionally, 
from state $0$ and $1$ there are transitions to every state: $0\pijl{b} i, 1\pijl{b} i$ for all $i\in S$. This LTS has $n$ states, $3n-3$ transitions and a maximum out degree of $n$ transitions.  

Figure~\ref{fig:slow_results} shows the results of calculating the bisimulation equivalence classes for $\textit{Fan\_out}_n$, with Wms and Par-BCRP. It is clear that the run time for Wms increases quadratically as the number of states grows linearly,
already becoming untenable 
for a small amount of states. 
On the other hand, in conformance with our analysis, our algorithm scales linearly. 

% Pieter: kunnen we een betere linespoint kiezen in de figuur? In zw/w is het verschil niet goed te zien.

%%% Local Variables:
%%% mode: latex
%%% TeX-master: "pram-bisimulation-paper"
%%% End:

\section{Weaker PRAM models}
\label{sec:weaker}
Algorithm~\ref{lst:alg} relies on concurrent writes to perform the constant time leader election and the choice of splitter. This means that the algorithm does not work on a weaker PRAM model. In this section we describe a modification for the \textit{common} CRCW PRAM and a limitation for the ERCW PRAM. 

It is shown in~\cite{kuvcera1982parallel} that any \textit{priority} CRCW PRAM using $n$ processors and $m$ memory cells can be simulated by a \textit{common} CRCW PRAM with $\bigO(n^2)$ processors and $\bigO(m^2)$ memory cells. For our problem, a \textit{common} CRCW PRAM with $\bigO(n^2)$ processors and no extra memory can solve leader election. 

This leader election on the \textit{common} CRCW PRAM is given in Algorithm~\ref{lst:alg_common}. Every processor is indexed as $P_{i,j}$ for all $i,j\in \{0, \dots, n -1\}$ for exactly $n^2$ processors. First, if $P_{i,j}$ has a state with index $i$ that is eligible to be the leader of a new block (line 1), it writes $\mathit{block}_i$, i.e., the index of the block the state is currently a member of, to position $i$ in a list $\mathit{new\_leader}$. In the next step, $P_{i,j}$ replaces $\mathit{new\_leader}_j$ with $0$ if $\mathit{new\_leader}_i = \mathit{new\_leader_j}$ and $i < j$. In other words, if $P_{i,j}$ encounters two states that can become the new leader, it selects the one with the smallest index. This is possibly a concurrent write, but all writes involve the same value $0$, hence this is allowed by the common CRCW PRAM. Next, if for $P_{i,j}$, $\mathit{new\_leader}_i \neq 0$, it writes the value $i$ to $\mathit{new\_leader}_{\mathit{block}_i}$ at line 7. For a given block $\mathit{block}_i$, the condition at line 6 only holds for the state with the largest index among the states that are split from $\mathit{block}_i$, hence there is at most one value is written.

\begin{algorithm}[t!]
	\If{$\mathit{mark}_i \neq \mathit{mark}_{\mathit{block}_i}$\label{alg:common_shouldsplit}}{
		$\mathit{new\_leader}_i := \mathit{block}_i$\;
		\If{$i < j$ and $\mathit{new\_leader}_i = \mathit{new\_leader}_j$} {
			$\mathit{new\_leader}_j := 0$\;
		}
		\If{$\mathit{new\_leader}_i \neq 0$} {
			$\mathit{new\_leader}_{\mathit{block}_i} := i$\;\label{alg:common_leader}
		}
		$\mathit{unstable}_{\mathit{block}_i} := \mtrue$\;
		$\mathit{block}_i := \mathit{new\_leader}_{\mathit{block}_i}$\; 
		$\mathit{unstable}_{\mathit{block}_i} := \mtrue$\;						
	}	
 	\caption{\label{lst:alg_common}Leader election for a \textit{common} CRCW PRAM}
\end{algorithm}

Leader election on the ERCW PRAM is not possible in constant time, which follows from a result by Cook et al.~\cite[Thm 4.]{cook1986upper}. This result says that all functions that have a \textit{critical} input are in $\Omega(log~n)$ on ERCW PRAMs. A bit sequence $I$ of size $n$ is critical for a function $f: \{ 0, 1\}^n \to \{ 0, 1 \}$ iff for any $I'$ obtained by flipping exactly one bit in $I$ we have $f(I) \neq f(I')$. Leader election can be seen as a function $f: S\rightarrow \{0,1\}$, where $f(i) = 1$ iff $i$ is elected as a new leader. This function has a critical input, namely the to be chosen leader.%, and thereby leader election in constant time is impossible.

\section{Related work}\label{sec:relatedwork}
In~\cite{lee_1994_RCPP} Lee and Rajasekaran study RCPP. They implement a parallel version of Kanellakis-Smolka that runs in $\bigO(n~log~n)$ time on $\frac{m}{log\ n}log\ log\ n$ CRCW PRAM processors. In~{\cite{rajasekaran1998parallel} they present a different algorithm based on Paige and Tarjan's algorithm~\cite{paige1987three} that has the same run time of $\bigO(n~log~n)$ but using only $\frac{m}{n}log~n$ CREW processors. Jeong et al.~\cite{jeong1998} presented a linear time parallel algorithm, but it is probablistic in the sense that it has a non-zero chance to output the wrong result. Furthermore, Wijs~\cite{wijs_2015} presented a GPU implementation of an algorithm to solve the strong and branching bisimulation partition refinement problems 
but although efficient for many practical cases, it has a quadratic time complexity. 

In a distributed setting, Blom and Orzan studied algorithms for refinement~\cite{blom2003distributed}. Those algorithms use message passing as ways of communication between different workers in a network and rely on a small number of 
processors. Therefore, they are very different in nature than our algorithm. 
Those algorithms were extended and optimized for branching bisimulation~\cite{blom2009}.

\section{Conclusion}
We proposed and implemented an algorithm for RCPP and BCPP. We proved that the algorithm stops in $\bigO(n + |Act|)$ steps on $max(n,m)$ CRCW PRAM processors. We implemented the algorithm for BCPP in CUDA, and conducted experiments that show the potential to compute bisimulation in practice in linear time. Further advances in parallel hardware will make this more feasible.

For future work, it is interesting to investigate whether RCPP can be solved in sublinear time, that is $\bigO(n^{\epsilon})$ for a $\epsilon < 1$, as requested in~\cite{lee_1994_RCPP}. 
It is also intriguing whether the practical effectiveness of the algorithm in \cite{wijs_2015}
by splitting blocks simultaneously can be combined with our algorithm, 
while preserving the linear time upperbound. 
Furthermore, it remains an open question whether our algorithm can be generalised for weaker bisimulations, such as weak and branching bisimulation~\cite{GlaWei96,jansen_2020}. The main challenge here is that the transitive closure of so-called internal steps needs to be taken into account.

%Jan: Commented because it is double blind review.
\subsubsection*{Acknowledgments}
This work is carried out in the context of the NWO AVVA project 612.001751 and the NWO TTW ChEOPS project 17249.

\bibliography{main}

\begin{thebibliography}{10}

\bibitem{balcazar_1992}
J.~Balc{\'a}zar, J.~Gabarro, and M.~Santha.
\newblock Deciding bisimilarity is p-complete.
\newblock {\em Formal aspects of computing}, 4(1):638--648, 1992.

\bibitem{blom2003distributed}
S.~Blom and S.~Orzan.
\newblock Distributed branching bisimulation reduction of state spaces.
\newblock {\em Electronic Notes in Theoretical Computer Science},
  89(1):99--113, 2003.

\bibitem{blom2009}
S.~Blom and J.~van~de Pol.
\newblock Distributed branching bisimulation minimization by inductive
  signatures.
\newblock In Lubos Brim and Jaco van~de Pol, editors, {\em Proceedings 8th
  International Workshop on Parallel and Distributed Methods in verifiCation,
  {PDMC} 2009, Eindhoven, The Netherlands, 4th November 2009}, volume~14 of
  {\em {EPTCS}}, pages 32--46, 2009.
\newblock URL: \url{https://doi.org/10.4204/EPTCS.14.3}, \href
  {http://dx.doi.org/10.4204/EPTCS.14.3} {\path{doi:10.4204/EPTCS.14.3}}.

\bibitem{mcrl2}
O.~Bunte, J.~F. Groote, J.~J.~A. Keiren, M.~Laveaux, T.~Neele, E.~P. de~Vink,
  W.~Wesselink, A.~J. Wijs, and T.~A.~C. Willemse.
\newblock The m{CRL}2 toolset for analysing concurrent systems - improvements
  in expressivity and usability.
\newblock In {\em Tools and Algorithms for the Construction and Analysis of
  Systems - 25th International Conference, {TACAS} 2019, Held as Part of the
  European Joint Conferences on Theory and Practice of Software, {ETAPS} 2019,
  Prague, Czech Republic, April 6-11, 2019, Proceedings, Part {II}}, pages
  21--39, 2019.
\newblock \href {http://dx.doi.org/10.1007/978-3-030-17465-1\_2}
  {\path{doi:10.1007/978-3-030-17465-1\_2}}.

\bibitem{cole1988parallel}
R.~Cole.
\newblock Parallel merge sort.
\newblock {\em SIAM Journal on Computing}, 17(4):770--785, 1988.

\bibitem{cook1986upper}
S.~Cook, C.~Dwork, and R.~Reischuk.
\newblock Upper and lower time bounds for parallel random access machines
  without simultaneous writes.
\newblock {\em SIAM Journal on Computing}, 15(1):87--97, 1986.

\bibitem{de1990action}
R.~De~Nicola and F.~Vaandrager.
\newblock Action versus state based logics for transition systems.
\newblock In {\em LITP Spring School on Theoretical Computer Science}, pages
  407--419. Springer, 1990.

\bibitem{wyllie1978parallelism}
S.~Fortune and J.~Wyllie.
\newblock Parallelism in random access machines.
\newblock In {\em Proceedings of the tenth annual ACM symposium on Theory of
  computing}, pages 114--118, 1978.

\bibitem{jansen_2020}
D.N. Jansen, J.F. Groote, F.J.~A. Keiren, and A.J. Wijs.
\newblock An {$O(m \log n)$} algorithm for branching bisimilarity on labelled
  transition systems.
\newblock In Armin Biere and David Parker, editors, {\em Tools and Algorithms
  for the Construction and Analysis of Systems}, volume 12079, pages 3--20.
  Springer International Publishing, 2020.
\newblock Series Title: Lecture Notes in Computer Science.
\newblock URL: \url{http://link.springer.com/10.1007/978-3-030-45237-7_1},
  \href {http://dx.doi.org/10.1007/978-3-030-45237-7_1}
  {\path{doi:10.1007/978-3-030-45237-7_1}}.

\bibitem{jeong1998}
C.~Jeong, Y.~Kim, Y.~Oh, and H~Kim.
\newblock A faster parallel implementation of kanellakis-smolka algorithm for
  bisimilarity checking.
\newblock In {\em Proceedings of the international computer symposium}.
  Citeseer, 1998.

\bibitem{kanellakis1983}
P.~Kanellakis and S.~Smolka.
\newblock {CCS} expressions, finite state processes, and three problems of
  equivalence.
\newblock {\em Information and Computation}, 86(1):43 -- 68, 1990.
\newblock URL:
  \url{http://www.sciencedirect.com/science/article/pii/089054019090025D},
  \href {http://dx.doi.org/https://doi.org/10.1016/0890-5401(90)90025-D}
  {\path{doi:https://doi.org/10.1016/0890-5401(90)90025-D}}.

\bibitem{kuvcera1982parallel}
L.~Ku{\v{c}}era.
\newblock Parallel computation and conflicts in memory access.
\newblock {\em Information Processing Letters}, 14(2):93--96, 1982.

\bibitem{lee_1994_RCPP}
I.~Lee and S.~Rajasekaran.
\newblock A parallel algorithm for relational coarsest partition problems and
  its implementation.
\newblock In David~L. Dill, editor, {\em Computer Aided Verification}, volume
  818, pages 404--414. Springer Berlin Heidelberg, 1994.
\newblock Series Title: Lecture Notes in Computer Science.
\newblock URL: \url{http://link.springer.com/10.1007/3-540-58179-0_71}, \href
  {http://dx.doi.org/10.1007/3-540-58179-0_71}
  {\path{doi:10.1007/3-540-58179-0_71}}.

\bibitem{Leisersoneaam9744}
C.~E. Leiserson, N.~C. Thompson, J.~S. Emer, B.~C. Kuszmaul, B.~W. Lampson,
  D.~Sanchez, and T.~B. Schardl.
\newblock There{\textquoteright}s plenty of room at the top: What will drive
  computer performance after moore{\textquoteright}s law?
\newblock {\em Science}, 368(6495), 2020.
\newblock URL: \url{https://science.sciencemag.org/content/368/6495/eaam9744},
  \href
  {http://arxiv.org/abs/https://science.sciencemag.org/content/368/6495/eaam9744.full.pdf}
  {\path{arXiv:https://science.sciencemag.org/content/368/6495/eaam9744.full.pdf}},
  \href {http://dx.doi.org/10.1126/science.aam9744}
  {\path{doi:10.1126/science.aam9744}}.

\bibitem{milner1980calculus}
R.~Milner.
\newblock {\em A Calculus of Communicating Systems}, volume~92 of {\em Lecture
  Notes in Computer Science}.
\newblock Springer, 1980.
\newblock URL: \url{https://doi.org/10.1007/3-540-10235-3}, \href
  {http://dx.doi.org/10.1007/3-540-10235-3} {\path{doi:10.1007/3-540-10235-3}}.

\bibitem{paige1987three}
R.~Paige and R.~E. Tarjan.
\newblock Three partition refinement algorithms.
\newblock {\em SIAM Journal on Computing}, 16(6):973--989, 1987.

\bibitem{rajasekaran1998parallel}
S.~Rajasekaran and I.~Lee.
\newblock Parallel algorithms for relational coarsest partition problems.
\newblock {\em IEEE Transactions on Parallel and Distributed Systems},
  9(7):687--699, 1998.

\bibitem{reniers2014results}
M.~A. Reniers, R.~Schoren, and T.A.C. Willemse.
\newblock Results on embeddings between state-based and event-based systems.
\newblock {\em The Computer Journal}, 57(1):73--92, 2014.

\bibitem{parallelscan}
S.~Sengupta, M.~Harris, M.~Garland, and J.~Owens.
\newblock {Efficient Parallel Scan Algorithms for GPUs}.
\newblock In {\em {Scientific Computing with Multicore and Accelerators}},
  chapter~19, pages 413--442. Taylor \& Francis, 2011.

\bibitem{stockmeyer1984simulation}
L.~Stockmeyer and U.~Vishkin.
\newblock Simulation of parallel random access machines by circuits.
\newblock {\em SIAM Journal on Computing}, 13(2):409--422, 1984.

\bibitem{DBLP:journals/fuin/Valmari10}
A.~Valmari.
\newblock Simple bisimilarity minimization in {$O(m \log n)$} time.
\newblock {\em Fundam. Informaticae}, 105(3):319--339, 2010.
\newblock URL: \url{https://doi.org/10.3233/FI-2010-369}, \href
  {http://dx.doi.org/10.3233/FI-2010-369} {\path{doi:10.3233/FI-2010-369}}.

\bibitem{GlaWei96}
R.~J. van Glabbeek and W.~P. Weijland.
\newblock {Branching Time and Abstraction in Bisimulation Semantics}.
\newblock {\em Journal of the ACM}, 43(3):555--600, 1996.

\bibitem{wijs_2015}
A.J. Wijs.
\newblock {GPU} accelerated strong and branching bisimilarity checking.
\newblock In {\em TACAS}, volume 9035 of {\em LNCS}, pages 368--383. Springer,
  2015.
\newblock \href {http://dx.doi.org/10.1007/978-3-662-46681-0_29}
  {\path{doi:10.1007/978-3-662-46681-0_29}}.

\bibitem{wijsscc}
A.J. Wijs, J.-P. Katoen, and D.~Bo\v{s}na\v{c}ki.
\newblock Efficient {GPU} {A}lgorithms for {P}arallel {D}ecomposition of
  {G}raphs into {S}trongly {C}onnected and {M}aximal {E}nd {C}omponents.
\newblock {\em Formal Methods in System Design}, 48(3):274--300, 2016.

\end{thebibliography}

\end{document}